\newtheorem{theorem}{Theorem}
\newtheorem{lemma}[theorem]{Lemma}
\theoremstyle{definition}
\newtheorem{definition}[theorem]{Definition}
\newtheorem{remark}[theorem]{Remark}
\theoremstyle{plain}
\crefname{question}{Question}{Questions}
\DeclareMathOperator{\poly}{poly}
\DeclareMathOperator{\polylog}{polylog}
\DeclareMathOperator*{\argmax}{arg\,max}
\newcommand{\rr}{r}
\newcommand{\rhoIndepSet}{\ensuremath{\rho}\textnormal{-\textsc{IndepSet}}\xspace}
\title{A Tolerant Independent Set Tester}
\author{Cameron Seth}
\affil{University of Waterloo}
\begin{document}
\maketitle

\begin{abstract}
    We give nearly optimal bounds on the sample complexity of $(\widetilde{\Omega}(\epsilon),\epsilon)$-tolerant testing the $\rho$-independent set property in the dense graph setting.
    In particular, we give an algorithm that inspects a random subgraph on $\widetilde{O}(\rho^3/\epsilon^2)$ vertices and, for some constant $c,$ distinguishes between graphs that have an induced subgraph of size $\rho n$ with fewer than $\frac{\epsilon}{c \log^4(1/\epsilon)} n^2$ edges from graphs for which every induced subgraph of size $\rho n$ has at least $\epsilon n^2$ edges.
    Our sample complexity bound matches, up to logarithmic factors, the recent upper bound by Blais and Seth (2023) for the non-tolerant testing problem, which is known to be optimal for the non-tolerant testing problem based on a lower bound by Feige, Langberg and Schechtman (2004).

    Our main technique is a new graph container lemma for sparse subgraphs instead of independent sets.
    We also show that our new lemma can be used to generalize one of the classic applications of the container method, that of counting independent sets in regular graphs, to counting sparse subgraphs in regular graphs.
\end{abstract}

\section{Introduction}
Graph property testing, first studied by Goldreich, Goldwasser and Ron \cite{goldreichPropertyTesting1998}, is the problem of distinguishing between graphs that have some property $\Pi$ from graphs that are sufficiently \emph{far} from having the property $\Pi$ by inspecting only a small fraction of the graph.
In particular, in the dense graph model, a graph $G$ is said to be $\epsilon$-far from the property $\Pi$ if at least $\epsilon n^2$ edges must be added or removed from $G$ to make it have the property.
An $\epsilon$-tester for graph property $\Pi$ is a bounded error\footnote{By \emph{bounded error} we mean there exist absolute constants $\delta_1>\delta_2$ such that if $G$ has property $\Pi$ then the algorithm accepts with probability at least $\delta_1,$ and if $G$ is $\epsilon$-far from $\Pi$ then the algorithm accepts with probability at most $\delta_2.$} randomized algorithm that distinguishes between graphs that have the property $\Pi$ from graphs that are $\epsilon$-far from $\Pi.$
An $\epsilon$-tester with sample complexity $s$ is an algorithm that takes a random sample $S$ of $s$ vertices and inspects only $G[S]$ of the input graph $G.$
We say that the sample complexity of $\epsilon$-testing $\Pi$ is the minimum $s$ such that there exists a $\epsilon$-tester for $\Pi$ with sample cost $s.$

Tolerant testing, introduced by Parnas, Ron and Rubinfeld \cite{parnas2006tolerant}, is a natural generalization of testing where the goal is to distinguish between graphs that are $\epsilon_1$-close to some property $\Pi$ from graphs that are $\epsilon_2$-far from $\Pi,$ where $\epsilon_1$-close means that at most $\epsilon_1 n^2$ edges need to be added or removed from the graph to make it have the property.
Such a tester is called an $(\epsilon_1,\epsilon_2)$-tolerant tester, and the goal is to find the minimum sample complexity of any such tester, as a function of $\epsilon_1,\epsilon_2$ and the input size.
It is desirable for such a tester to apply for any $\epsilon_1<\epsilon_2,$ referred to as a fully tolerant tester, but this is not necessary.
For example, if $\epsilon_1=0$ then this corresponds to the standard notion of testing.
In general, $\epsilon_1$ may be a function of $\epsilon_2.$

Testing the property of having an independent set\footnote{Testing the property of having a $\rho n$ independent set is equivalent to testing the property of having a $\rho n$ clique by considering the complement of the graph. \cite{goldreichPropertyTesting1998} studied the clique property but for our work it is more convenient to phrase in terms of independent sets.} of size $\rho n,$ denoted \rhoIndepSet, is one of the original problems studied in \cite{goldreichPropertyTesting1998}, where they gave an $\epsilon$-tester with sample complexity $\widetilde{O}(\rho/\epsilon^4).$\footnote{Here and throughout we use $\widetilde{O}(\cdot)$ and $\widetilde{\Omega}(\cdot)$ to hide terms that are polylogarithmic in the argument.}
The standard version is now very well understood.
In particular, recent work by Blais and Seth \cite{blaisSeth} showed that it is possible to $\epsilon$-test the $\rho$-indpendent set property with sample complexity $\widetilde{O}(\rho^3/\epsilon^2)$, and this is optimal, up to logarithmic factors, based on a lower bound by Feige, Langberg and Schechtman \cite{feigeCliqueTesting2004}.

Despite the tight bounds for the sample complexity of the standard version of $\epsilon$-testing the \rhoIndepSet property, little is known for the tolerant version.
Shown in \cite{parnas2006tolerant}, any tester that takes a random sample $S$ of vertices and inspects only $G[S]$ has a very small amount of tolerance by default based on the fact that a random subgraph won't find any violating edges in a graph that is very close to the property,\footnote{They actually show this for testers which make edge queries to the input which are uniformly distributed, but a similar argument applies for testers where the vertex samples are uniformly distributed.} and so the tester of Blais and Seth \cite{blaisSeth} gives a $(\epsilon^4,\epsilon)$-tolerant tester with sample complexity $\widetilde{O}(\rho^3/\epsilon^2).$
However, this is described by \cite{parnas2006tolerant} as ``weak" tolerance, and the real question is: when does there exist an efficient tolerant tester beyond this weak amount of tolerance given by the standard tester?

There have been no direct results analyzing this question for tolerantly testing \rhoIndepSet.
The best prior result comes from a general result for tolerant testing of general graph partition properties, which is a class of properties for which each property corresponds to graphs that can be partitioned into $k$ parts with specific bounds on the edge densities between parts.
This class of properties includes $k$-colorability, \rhoIndepSet (with $k=2$), and many others.
Fiat and Ron show that for any $\epsilon_1 < \epsilon_2$ there is a $(\epsilon_1,\epsilon_2)$-tolerant tester for any graph partition property with sample complexity $\poly(k,1/(\epsilon_2-\epsilon_1))$ \cite[Appendix C]{fiat2021efficient}.
These testers work by making multiple calls to a standard tester for general graph partition properties (\cite{goldreichPropertyTesting1998}, \cite{fischer2010approximate} or \cite{shapira2024GPP}, for example) and as a result have large polynomial dependence on $1/\epsilon.$ 
There is also a general result by Fischer and Newman saying that any graph property that has an $\epsilon$-tester with sample complexity $s(\epsilon)$ has an $(\epsilon_1,\epsilon_2)$-tolerant tester, for any $\epsilon_1<\epsilon_2,$ with sample complexity that is tower type in $s(\epsilon_2-\epsilon_1)$ \cite{fischerNewman2007testing}, and this was recently improved by Shapira, Kushnir and Gishboliner to a doubly exponential bound in $s(\epsilon_2-\epsilon_1)$ \cite{shapira2024revisited}.

\subsection{A Tolerant Independent Set Tester}
Our main contribution is a $(\widetilde{\Omega}(\epsilon),\epsilon)$-tolerant tester for \rhoIndepSet with nearly optimal sample complexity.
We note that our theorem applies for any $\epsilon > 0,$ however the problem is only non-trivial when $\epsilon < \rho^2/2$ because every graph is $\rho^2/2$-close to $\rhoIndepSet.$

\begin{restatable}{theorem}{tolTester}
    \label{thm:tolerantTester}
    There exists a constant $c$ such that for any $\epsilon>0$ the sample complexity of $\left(\frac{\epsilon}{c \log^4(1/\epsilon)},\epsilon\right)$-tolerant testing the $\rhoIndepSet$ property is $\widetilde{O}(\rho^3/\epsilon^2).$
\end{restatable}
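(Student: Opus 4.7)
The natural algorithm is to draw a uniformly random sample $S \subseteq V(G)$ of size $s = \widetilde{O}(\rho^3/\epsilon^2)$ and to accept if and only if $G[S]$ contains a subset of $\rho s$ vertices spanning at most $\tau$ edges, for a threshold $\tau = \Theta(\epsilon_1 s^2)$ with $\epsilon_1 := \epsilon/(c\log^4(1/\epsilon))$. The threshold is placed strictly between the expected edge count of a good subgraph in a YES instance and anything attainable in a NO instance, with constants pinned down only after the two analyses below are carried out.

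For completeness, if $G$ has a $\rho n$-subset $T$ with at most $\epsilon_1 n^2$ edges, I would set $T' := S \cap T$ and show by Chernoff that $|T'| \geq \rho s$ with high probability. The number of edges $e(G[T'])$ is a sum of indicator-type random variables whose expectation is $\binom{\rho s}{2} \cdot e(G[T])/\binom{\rho n}{2} \approx \epsilon_1 s^2$, so variance or Hoeffding bounds put $e(G[T']) \leq \tau$ whp provided the constants in $\tau$ are chosen slightly above this expectation.

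The heart of the proof is soundness: if every $\rho n$-subset of $V(G)$ spans at least $\epsilon n^2$ edges, I must show that with high probability no $\rho s$-subset of $S$ spans fewer than $\tau$ edges. A direct union bound over the $\binom{s}{\rho s}$ candidate subsets is hopeless. Instead I would invoke the new graph container lemma for sparse subgraphs advertised in the abstract, which generalizes the container argument of Blais and Seth from independent sets to subgraphs with a small but nonzero edge count. The lemma should produce a family $\cC$ of subsets of $V(G)$ with $\log |\cC| = \widetilde{O}(\rho^3/\epsilon^2) \cdot \log n$ (or similar) such that every $\rho n$-subset of $V(G)$ with at most $\Theta(\epsilon n^2)$ edges is contained in some $C \in \cC$, with each $C$ itself inducing fewer than, say, $\epsilon n^2/2$ edges. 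The soundness analysis is then contrapositive: a hypothetical bad $T' \subseteq S$ with $|T'| = \rho s$ and $e(G[T']) \leq \tau$ would, by union-bounding over $\cC$ and applying concentration of $|S \cap C|$ and $e(G[S \cap C])$, lift to the existence of a $\rho n$-subset of $V(G)$ with $\ll \epsilon n^2$ edges, contradicting the NO hypothesis.

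The main obstacle is proving the new container lemma with a sharp enough quantitative tradeoff between $|\cC|$, the edge allowance inside each container, and the polylogarithmic factors in $\epsilon$, so that the stated tolerance is exactly $\epsilon/\log^4(1/\epsilon)$ rather than a larger polylogarithmic loss. The $\log^4$ most likely tracks the number of rounds of an iterative Saxton--Thomason style container construction, each round shedding a $\log(1/\epsilon)$-type factor from the effective edge budget while doubling (or similar) the log of the container count; getting it to be $\log^4$ rather than a higher power will require a carefully tuned balance between the per-round edge reduction and the growth of the family. Tracking these logarithmic factors cleanly through the soundness union bound will be the most delicate part of the argument.
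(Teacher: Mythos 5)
Your algorithm and completeness analysis match the paper's, but the soundness sketch has two genuine gaps. First, you assume the container lemma delivers \emph{full} containment: ``every $\rho n$-subset \dots is contained in some $C\in\mathcal{C}$.'' For sparse subgraphs (as opposed to independent sets) this is exactly what cannot be achieved: the greedy fingerprint step that removes neighbours of a selected vertex may also remove vertices of the sparse set $J$ itself, so any container lemma here can only guarantee \emph{approximate} containment. The paper's \cref{lem:GCL} gives containers of size $(1-\alpha)\rho n$ that contain a $(1-O(\alpha\log^2(8\rho/\epsilon)/\sqrt{\ell}))$-fraction of $J$, and the whole soundness analysis rests on the quantitative gap between these two deficits: the sample is expected to put only $(1-\alpha)\rho s$ vertices in a container of size $(1-\alpha)\rho n$, yet a bad $J\subseteq S$ would force at least $(1-\alpha/2)\rho s$ of them there, a deviation of $\alpha\rho s/2$ that Chernoff then controls. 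A version of your argument built on full containment does not exist, and once you switch to approximate containment you must track this $\alpha$-trade-off (which is also where the fingerprint-size bound $|F|\lesssim \alpha\rho^2\log^2(1/\epsilon)/\epsilon$ enters).

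Second, your union bound ``over $\mathcal{C}$ \dots applying concentration of $|S\cap C|$'' cannot work as stated: the family has roughly $n^{f_{\max}}$ members with $f_{\max}\approx\rho^2\log^2(1/\epsilon)/\epsilon$, so beating $|\mathcal{C}|$ with a bare Chernoff bound would force the sample size to grow with $n$. The essential device (already present in Blais--Seth and reused here) is that the fingerprint vertices of the relevant container all lie in $J\subseteq S$, so each term of the union bound carries the factor $\Pr[F_{\rm vert}\subseteq S]=\binom{s}{f}/\binom{n}{f}$, which cancels the $\binom{n}{f}$ count of fingerprints and leaves an $n$-independent bound. Relatedly, the contrapositive mechanism you describe --- lifting a sparse $\rho s$-subset of $S$ to a sparse $\rho n$-subset of $G$ --- is not how the argument goes and is not obviously realizable (a random sample of a dense graph can contain a sparse subset purely by chance); the paper instead directly bounds the probability that $S$ over-samples any single container conditioned on containing its fingerprint. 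Your guess about the origin of the $\log^4(1/\epsilon)$ tolerance (rounds of a Saxton--Thomason iteration) is also off: it comes from setting $\ell=\Theta(\log^4(1/\epsilon))$ so that the containment deficit $O(\alpha\log^2(8\rho/\epsilon)/\sqrt{\ell})$ drops below $\alpha/2$.
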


The sample complexity in \cref{thm:tolerantTester} matches the sample complexity, up to logarithmic factors, of the non-tolerant tester for \rhoIndepSet by Blais and Seth \cite{blaisSeth}, which itself is known to be optimal, up to logarithmic factors, based on a lower bound on the non-tolerant version by Feige, Langberg and Schechtman \cite{feigeCliqueTesting2004}.
In other words, \cref{thm:tolerantTester} shows that, surprisingly, $(\widetilde{\Omega}(\epsilon),\epsilon)$-tolerant testing \rhoIndepSet is no harder than the non-tolerant version.

In other settings, there are properties that demonstrate that tolerant testing is, in general, much harder than non-tolerant testing.
In particular, in the bounded degree model, Goldreich and Wigderson gave a graph property for which there is a subexponential gap between the query complexity of non-tolerant testing and the query complexity of tolerant testing (i.e.\ the query complexity in the non-tolerant setting is $q$ and the query complexity in the tolerant setting is $\exp(q^c)$ for some constant $c$) \cite{goldreich2022TolerantvsStandard}.
Fischer and Fortnow show a similar separation for testing a property of boolean functions \cite{fischer2005tolerantvsNontolerant}.
It is not known whether such a strong separation exists in the dense graph setting.

We note that it still remains open whether there exists a fully tolerant tester (a $(\epsilon_1,\epsilon_2)$-tolerant tester for any $\epsilon_1$ and $\epsilon_2$) with $\poly(1/(\epsilon_2-\epsilon_1))$ sample complexity.
Efficient fully tolerant testers are desirable due to the fact that they imply an efficient distance approximation algorithm \cite{parnas2006tolerant}.
There are challenges with strengthening our techniques to give a fully tolerant tester, see \cref{rem:logNecessity} for more details.

We also note testing algorithms can also be measured by their edge query complexity instead of sample complexity.
The edge query complexity is the number of individual edges inspected by the algorithm, and such algorithms may not operate by taking a sample of vertices and inspecting all the edges in the sample.
However, it has been shown by Goldreich and Trevisan \cite{GoldreichTrevisan03} that there is at most a quadratic gap between the edge query complexity and the sample complexity of testing any graph property, and a similar argument applies for tolerant testing of graph properties.
Blais and Seth \cite{blaisSeth2024new} show that there is a gap between the edge query complexity of \rhoIndepSet and the number of edges inspected by the optimal sample based tester, but it remains open whether a similar gap exists in the tolerant setting for \rhoIndepSet.

\subsection{Main Technique: New Container Lemma}
Our main technique to prove \cref{thm:tolerantTester} is the graph container method, which is a tool from extremal combinatorics that has seen many applications since its generalization to the hypergraph container method by Balogh, Morris and Samotij \cite{baloghIndependentSetsHypergraphs2015} and Saxton and Thomason \cite{saxtonThomasonHypergraphContainers2015}.
The idea of the container method is that for any graph $G=(V,E)$ that satisfies certain conditions, any independent set $I$ in $G$ has a very small fingerprint $F \subseteq I$, which defines a relatively small container $C(F) \subseteq V$, such that the independent set $I$ is a subset of the container (which we will often phrase as: the independent set $I$ is \emph{contained} in the container).
In a sense, the fingerprint of an independent set $I$ can be viewed as the most ``informative" vertices in the independent set, and the function $C(F)$ corresponds to using this information to construct a set of potential candidates for the vertices of the independent set.
The key observation is that since each fingerprint is small, there can only be so many fingerprints, and since each fingerprint defines a container, then there can only be so many containers.
This method, in the form of a container lemma for independent sets in graphs that are $\epsilon$-far from \rhoIndepSet, was used by Blais and Seth to get the optimal non-tolerant tester for the \rhoIndepSet property \cite{blaisSeth}.

We wish to use this technique to analyze a tolerant tester for \rhoIndepSet, but in order to do so we require a container lemma for sparse subgraphs\footnote{Here and throughout we say ``subgraph" to mean induced subgraph.} instead of independent sets.
The main challenge with proving such a lemma has to do with the standard way that the collection of fingerprints is constructed.
In particular, a fingerprint for an independent set $I$ in $G=(V,E),$ and the associated container that contains $I,$ is constructed through a greedy procedure as follows.
After initiating the container to $V,$ in each step the vertex $v \in I$ with the highest degree in the current container is added to the fingerprint, and the neighbours of $v$ and the vertices with higher degree than $v$ are removed from the current container.
In this way, the independent set is always a subset of the current container.
However, this fails dramatically when dealing with sparse subgraphs: running this procedure on a set $J$ for which $G[J]$ is a sparse subgraph would not guarantee that $J$ remains in the current container in each step because there may be vertices in $J$ that neighbour the selected vertex.
Even if we relax the condition so that we only require that \emph{most} of $J$ is contained in the container, which is actually all that we need to prove \cref{thm:tolerantTester}, the fingerprint procedure still fails because all of $J$ may be removed by a single vertex selected to the fingerprint that neighbours all vertices in $J.$

Due to this challenge, nearly all of the prior work on the graph and hypergraph container method applies to independent sets and not sparse subgraphs.
Container lemmas for sparse subgraphs have only been studied in the work of Saxton and Thomason \cite{saxtonThomasonHypergraphContainers2015}, as well as a very recent work by Nenadov \cite{nenadov2024counting}, however, in both of these works the container lemmas apply only for very sparse subgraphs.
In \cite{saxtonThomasonHypergraphContainers2015} the results apply for subgraphs $G[J]$ for which the number of edges is at most roughly $\frac{1}{|J|} |J|^2$, and in \cite{nenadov2024counting} the results apply for subgraphs $G[J]$ for which the maximum degree is much less than $\frac{\epsilon}{\rho^2}|J|$ (in graphs that are $\epsilon$-far from \rhoIndepSet).
For our application we require a container lemma for a much larger range of sparse subgraphs.
In particular, we require a lemma that applies for sparse subgraphs with edge density up to roughly $\frac{\epsilon}{\rho^2},$ which is the minimum edge density of the entire graph provided it is $\epsilon$-far from \rhoIndepSet, and for sparse subgraphs $J$ that may have maximum degree up to $|J|.$
None of the prior results apply to this setting, and it is not even clear if such a container lemma should exist.

We address these challenges and give a new container lemma for sparse subgraphs in graphs that are $\epsilon$-far from \rhoIndepSet.
In order to state our lemma, we require a new definition of fingerprints.

\begin{definition}
    \label{def:fingeprint}
    Let $U \subseteq V$ be a subset of vertices in a graph $G=(V,E)$.
    A \emph{fingerprint} on $U$ is a pair $(F,R)$ where $F$ is a sequence $(f_1,f_2,\dots,f_{|F|})$ where each $f_i \in U \times \{\uparrow,\downarrow\}$ and $R \in \left([|F|] \times U \right) \cup \{\bot\}.$
    $F$ is called the fingerprint sequence, and $R$ is called the revision.
    The $\bot$ is used to denote that there is no revision.
    We use $\mathcal{F}(U)$ to denote the set of all possible fingerprints on $U.$
\end{definition}

We can now state our new graph container lemma for sparse subgraphs.

\begin{restatable}{lemma}{lemGCL}
    \label{lem:GCL}
    Let $\epsilon < \rho^2/2$ and let $\ell \geq 1024 \log^4(1/\epsilon).$
    If $G=(V,E)$ is $\epsilon$-far from having a $\rho n$ independent set then there exists a set of fingerprints $\mathcal{F} \subseteq \mathcal{F}(V)$ and a function $C: \mathcal{F} \rightarrow P(V)$ that satisfies the following:\footnote{$P(V)$ denotes the power set of $V.$}
    \begin{enumerate}
        \item  For any $(F,R) \in \mathcal{F},$ $|C(F,R)| \leq \left(1-\frac{\epsilon}{2\rho^2}\right)\rho n.$
        Further, letting $\alpha_{(F,R)}$ denote the value in $[\frac{\epsilon}{2\rho^2},1]$ that satisfies $|C(F,R)| = \left(1-\alpha_{(F,R)}\right)\rho n,$ then
        \item For any $(F,R) \in \mathcal{F},$ $|F| \leq \frac{32 \alpha_{(F,R)} \rho^2 \log^2(8\rho/\epsilon)}{\epsilon},$ and
        \item For any $J \subseteq V$ for which $G[J]$ has fewer than $\frac{\epsilon}{\ell \rho^2}|J|^2$ edges, there exists $(F,R) \in \mathcal{F}(J) \cap \mathcal{F}$ such that $|J \cap C(F,R)| \geq \left(1-\frac{32\alpha_{(F,R)} \log^2(8\rho/\epsilon)}{\sqrt{\ell}}\right)|J|.$
    \end{enumerate}
\end{restatable}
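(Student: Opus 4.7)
The plan is to adapt the degree-ordered greedy container construction of Blais and Seth so that it produces containers holding not only independent sets but also sparse subgraphs. The main novelty will be the revision slot $R$: the procedure runs a deterministic greedy pass, but the reconstruction is allowed to undo a single step where the greedy choice would otherwise delete a large fraction of $J$ from the container.

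Concretely, I would define the fingerprint-producing procedure as follows. Initialize $C_0 = V$; at step $i$, let $v_i$ be a vertex of maximum degree in $G[C_{i-1}]$. If $v_i \in J$, append $(v_i, \uparrow)$ to the fingerprint and set $C_i = C_{i-1} \setminus (\{v_i\} \cup N_{C_{i-1}}(v_i))$; if $v_i \notin J$, append $(v_i, \downarrow)$ and set $C_i = C_{i-1} \setminus \{v_i\}$. The reconstruction $C(F,R)$ replays these rules using only the labels. Because $G$ is $\epsilon$-far from having a $\rho n$ independent set, whenever $|C_i| \geq (1-\alpha)\rho n$ the set $C_i$ extended by any $\alpha \rho n$ extra vertices contains at least $\epsilon n^2$ edges, so $G[C_i]$ itself has $\Omega(\epsilon n^2)$ edges and hence a vertex of degree $\Omega(\epsilon n/\rho)$. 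Each $\uparrow$ step therefore removes an $\Omega(\epsilon/\rho^2)$ fraction of $|C_{i-1}|$, and by the standard potential argument from Blais--Seth the procedure terminates in $O(\alpha \rho^2 \log^2(\rho/\epsilon)/\epsilon)$ steps, yielding Properties 1 and 2.

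The real work is Property 3. Writing $d_J(v_i) := |N_{C_{i-1}}(v_i) \cap J|$, a single $\uparrow$ step removes at most $d_J(v_i)$ vertices of $J$ from the container. I would partition the $\uparrow$ steps into \emph{typical} ones, with $d_J(v_i)$ below a threshold $\tau \approx |J|/\sqrt{\ell}$, and at most one \emph{heavy} step exceeding $\tau$. For typical steps, the global sparsity bound $\sum_{v \in J} \deg_{G[J]}(v) < 2 \epsilon |J|^2 / (\ell \rho^2)$ combined with the step bound from the previous paragraph gives a total loss of order $\alpha |J| \log^2(\rho/\epsilon)/\sqrt{\ell}$, which is exactly the rate demanded by the lemma. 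The heavy step is absorbed by the revision: we record $R = (i^*, u)$ with $u = v_{i^*}$, and instruct the reconstruction to treat step $i^*$ as if $u$ were labeled $\downarrow$, so $N_{C_{i^*-1}}(u) \cap J$ survives inside the container.

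The main obstacle I anticipate is justifying that at most one heavy step can occur, so a single revision slot suffices. The intuition is that each heavy $\uparrow$ step would, if executed, slash $|J \cap C|$ by at least a $1/\sqrt{\ell}$ factor and, by the sparsity hypothesis, slash $|E(G[J \cap C])|$ by even more, eliminating the edges needed to trigger another heavy pivot. Making this precise so the quantitative threshold $\ell \geq 1024 \log^4(1/\epsilon)$ is exactly what is required, while simultaneously keeping $|F|$ within the stated bound and guaranteeing Properties 2 and 3 hold together, will require tightly coupled bookkeeping of $|C_i|$, $|J \cap C_i|$, and $|E(G[J \cap C_i])|$ across the procedure, and this is where I expect the technical heart of the proof to lie.
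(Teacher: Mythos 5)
There are genuine gaps here; the approach as sketched would not go through. The most fundamental one is the selection rule. Picking the maximum-degree vertex of $G[C_{i-1}]$ controls how fast the container shrinks, but it gives you no control over how much of $J$ you lose at each step, and your accounting for the ``typical'' steps does not close: with threshold $\tau \approx |J|/\sqrt{\ell}$ and $O(\alpha\rho^2\log^2(\rho/\epsilon)/\epsilon)$ steps, the worst-case loss is $O\bigl(\alpha\rho^2\log^2(\rho/\epsilon)\,|J|/(\epsilon\sqrt{\ell})\bigr)$, which exceeds the required $O(\alpha\log^2|J|/\sqrt{\ell})$ by a factor of $\rho^2/\epsilon$; the alternative bound via $\sum_{v\in J}\deg_{G[J]}(v)$ is $O(\epsilon|J|^2/(\ell\rho^2))$, which also is not $O(\alpha\log^2|J|/\sqrt{\ell})$ when $|J|=\Theta(\rho n)$. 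The paper's fix is to change the greedy objective itself: at each step it selects the vertex of $J$ \emph{and} the operation ($\uparrow$ or $\downarrow$) maximizing the ratio (container vertices removed)/($J$-vertices removed), and proves via a degree-threshold lemma (\cref{lem:degreeThresholdExistence}) that this ratio is always at least $\sqrt{\ell}\max(|C_{t-1}|,\rho n)/(16\log(8\rho/\epsilon)|J|)$. That multiplicative coupling between container shrinkage and $J$-loss is what makes the telescoping product bound $\sum_t\beta_t = O(\log^2(8\rho/\epsilon)/\sqrt{\ell})$ work, and it has no analogue in a max-degree-driven pass. Note also that recording singleton removals $(v_i,\downarrow)$ for $v_i\notin J$ makes $|F|$ potentially linear in $n$ and puts non-$J$ vertices into the fingerprint, violating both Property~2 and the requirement $(F,R)\in\mathcal{F}(J)$ in Property~3, which is essential for the union bound in the tester.

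The second gap is the ``at most one heavy step'' claim, which is false. Sparsity of $G[J]$ only bounds the \emph{average} degree; there can be up to $2\epsilon|J|/(\sqrt{\ell}\rho^2)$ vertices of $J$ with $G[J]$-degree above $|J|/\sqrt{\ell}$ (e.g., $G[J]$ a disjoint union of many stars), and removing one such neighbourhood does not prevent the next max-degree pivot from being another such vertex --- \cref{rem:logNecessity} in the paper constructs essentially this obstruction. So a single undo slot cannot absorb the heavy steps. In the paper the revision serves a completely different purpose: it is an \emph{additional} removal $C_{|F|}\setminus (C_i)_{\uparrow v}$ applied at the end, used when the final container is too large relative to the containment achieved, and its existence (\cref{lem:revision}) is the technical heart of the argument, proved by showing that otherwise either $E(D\setminus C)$ or $E(C,D\setminus C)$ is so large that the greedy ratio bound is contradicted or $\epsilon$-farness is violated. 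Your undo-style revision also creates a consistency problem: if the reconstruction recomputes max-degree choices, changing step $i^*$ changes every subsequent container, so $C(F,R)$ is no longer determined by the recorded data in a way that matches the forward procedure.
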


To prove \cref{lem:GCL} we modify the standard container approach in three ways, which we describe briefly below (a more detailed overview can be found in \cref{sect:overview}).
First, as mentioned above, it is useful to change the definition of a fingerprint to decouple the operations of removing the neighbours of the selected vertex and removing the higher degree vertices of the selected vertex.
This is represented by the $\uparrow$ or $\downarrow$ that is paired with each vertex in the fingerprint sequence, where the $\uparrow$ corresponds to the operation of ``remove higher degree vertices", and the $\downarrow$ corresponds to the operation of ``remove neighbours".
In other words, instead of viewing a fingerprint as a set of the most informative vertices, we view a fingerprint as a sequence of informative vertices along with instructions of $\uparrow$ or $\downarrow$ to denote how to construct the container using each fingerprint vertex.
The second change has to do with how fingerprint vertices are selected.
In the case of constructing a fingerprint for an independent set, at each step the most ``informative" vertex is the vertex that tells the most information about the independent set by revealing lots of vertices that are \emph{not} in the independent set.
For a sparse subgraph $J,$ the most informative vertex is the vertex that can eliminate many vertices from the current set of candidates, without eliminating too many vertices from $J,$ and so the greedy procedure now selects the vertex that maximizes this ratio.
The final change is due to the fact that, even with the above modifications, the fingerprint/container procedure still may not reach a container that contains a sufficiently high fraction of the sparse subgraph. 
As a result, we do an additional step that makes a ``revision" to the container.
The revision is represented by an index of the fingerprint sequence and a vertex in the sparse subgraph, and can be viewed as an additional instruction on how to revise the container to meet the desired containment bounds.

We note that there is an important trade off in \cref{lem:GCL} between the size of the container and the size of the fingerprint, and this trade off is balanced with the fraction of $J$ that is contained in the container.
In particular, for sufficiently large $\ell$ (at least roughly $2056 \log^4(1/\epsilon)$), for any sparse subgraph $G[J]$ there exists a fingerprint $(F,R) \in \mathcal{F}$ such that $|C(F,R)|=(1-\alpha)\rho n$ and at least $(1-\alpha/2)|J|$ of $J$ is contained in the container.
These trade offs are necessary in order to prove the optimal bounds on the sample complexity in \cref{thm:tolerantTester}.

\subsection{Counting Sparse Subgraphs in Regular Graphs}
Our new graph container lemma generalizes the standard container approach to handle sparse subgraphs instead of independent sets.
While our intended application was the tolerant tester of \cref{thm:tolerantTester}, we expect it should have other applications.
We demonstrate this by applying \cref{lem:GCL} to generalize one of the classic applications of the container method, that of counting independent sets in regular graphs, to counting sparse subgraphs in regular graphs.

The problem of counting independent sets in $d$-regular graphs is very well understood.
Proving a conjecture of Granville, Alon first showed that the number of independent sets in a $d$-regular graph is at most $2^{\frac{n}{2}\left(1+O(d^{-0.1})\right)},$ and conjectured that the graph with $\frac{n}{2d}$ disjoint copies of $K_{d,d},$ which has $\left(2^{d+1}-1\right)^{\frac{n}{2d}}$ independent sets, has the maximum number of independent sets \cite{alon1991independent}.
Sapozhenko improved the upper bound by Alon to $2^{\frac{n}{2}\left(1+O(\sqrt{\log(d)/d}))\right)}$ using the graph container method \cite{sapozhenko2001number}.
Since then, Alon's conjecture has been been proven, first by Kahn for bipartite regular graphs \cite{kahn2001entropy}, and then by Zhao for general graphs \cite{zhao2010number}.

But what can be said about the number of sparse subgraphs (of at most some edge density) in a $d$-regular graph?
As far as we know, there have been no results studying this natural generalization of the problem.
The container lemmas for sparse subgraphs by Saxton and Thomason \cite{saxtonThomasonHypergraphContainers2015} and Nenadov \cite{nenadov2024counting} can be used here, but only apply for sparse subgraphs $G[J]$ with very small edge density (roughly $\frac{1}{|J|},$ which is only $O(\frac{1}{n})$ for subgraphs on $\Omega(n)$ vertices) or sparse subgraphs with very small maximum degree (roughly $\frac{d}{n}|J|$).
Using a similar argument to Sapozhenko \cite{sapozhenko2001number}, but using \cref{lem:GCL} instead of a standard graph container lemma for independent sets, we prove the following theorem, which applies for a much larger range of sparse subgraphs.

\begin{restatable}{theorem}{thmCountingSparseSubgraphs}
    \label{thm:countingSparseSubgraphs}
    Let $G=(V,E)$ be a $d$-regular graph.
    There exists a constant $c$ such that for any $k \geq c \log^9(n)$ the number of subsets $J\subseteq V$ such that $G[J]$ has edge density at most $\frac{1}{k}\frac{d}{n}$ is at most \[2^{\frac{n}{2}\left(1+O\left(\frac{\log^3n}{d}\right)+O\left(\frac{\log^{3}n}{k^{1/3}}\right)\right)}.\]
\end{restatable}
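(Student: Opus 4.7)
The plan is to mirror Sapozhenko's classical argument \cite{sapozhenko2001number} for counting independent sets in $d$-regular graphs, but to substitute \cref{lem:GCL} for the standard independent-set container lemma. Set $\rho = \tfrac{1}{2} + \delta$ for a parameter $\delta > 0$ to be chosen and $\epsilon = \delta d / n$. A simple edge-counting argument shows that $G$ is $\epsilon$-far from having a $\rho n$ independent set: any $S \subseteq V$ of size $\rho n$ satisfies $\sum_{v\in S}\deg_G(v) = \rho n \cdot d$ while the number of edges from $S$ to $V\setminus S$ is at most $(1-\rho)nd$, so $e(G[S]) \geq \delta d n = \epsilon n^2$. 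Thus \cref{lem:GCL} applies. Choose $\ell$ large enough that both the improved containment $|J \cap C(F,R)| \geq (1-\alpha_{(F,R)}/2)|J|$ noted after \cref{lem:GCL} and the covering condition $\frac{\epsilon}{\ell\rho^2}\geq \frac{d}{kn}$ hold; these constraints reduce to a lower bound on $\delta$, which $k \geq c\log^9 n$ easily accommodates.

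The total number of sparse subgraphs is at most
\begin{align*}
\sum_{(F,R)\in\mathcal{F}} \big|\{J\subseteq V : |J\cap C(F,R)|\geq (1-\alpha_{(F,R)}/2)|J|\}\big|.
\end{align*}
For a fingerprint with parameter $\alpha$ and container $C$ of size $(1-\alpha)\rho n$, the containment constraint forces $|J \setminus C| \leq \tfrac{\alpha}{2-\alpha}|J\cap C| \leq \alpha|C|$, so the number of such $J$ is at most $2^{|C|}\binom{n-|C|}{\leq \alpha|C|}$. The number of fingerprints with parameter $\alpha$ is at most $(2n)^{|F|}(|F|n+1) \leq 2^{O(|F|\log n)}$, where $|F|=O(\alpha n\log^2(1/\epsilon)/(\delta d))$ by \cref{lem:GCL}. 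The critical value is $\alpha^\star \approx 2\delta$, at which point the container has size $\approx n/2$: the fingerprint cost evaluates to $|F|\log n = O(n\log^3 n/d)$, matching one claimed error term, and the outside-container count $\binom{n-|C|}{\leq\alpha^\star|C|} \leq 2^{O(\delta n\log(1/\delta))}$ evaluates to $2^{O(n\log^3 n/k^{1/3})}$ once we set $\delta = \Theta(\log^2 n / k^{1/3})$, matching the other error term.

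The main obstacle is the trade-off analysis across the full range $\alpha\in[\epsilon/(2\rho^2),1]$: the outside-container count $\binom{n-|C|}{\leq\alpha|C|}$ grows like $2^{\Theta(\alpha\log(1/\alpha)\cdot n)}$ and can be as large as $\Theta(n)$ in the exponent for constant $\alpha$, which would overwhelm the $2^{|C|}$ savings. Controlling this requires exploiting the sparsity of $J$ beyond the containment bound: from $|J \cap C| \geq (1-\alpha/2)|J|$ and $|J \cap C| \leq |C|$ we obtain $|J| \leq (1-\alpha)\rho n/(1-\alpha/2)$, which forces $\alpha \lesssim \delta$ whenever $|J|$ is close to $n/2$, while a direct edge-counting argument in the $d$-regular graph shows $|J| \leq n/2 + n/k$ for every sparse $J$. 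Together these constraints restrict the effective range of $\alpha$ for each $J$, and a careful case analysis then bounds the contribution from every regime by the claimed quantity.
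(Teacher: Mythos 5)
Your setup matches the paper's: the same $\epsilon$-farness computation for sets of size $(\tfrac12+\delta)n$, the same Sapozhenko-style union bound over the fingerprints of \cref{lem:GCL}, and essentially the right parameter scale $\delta=\Theta(\polylog(n)/k^{1/3})$. But the accounting in your union bound has a genuine gap, which you partly flag and do not resolve. Once you freeze the containment guarantee at $|J\cap C|\geq(1-\alpha/2)|J|$, a single fingerprint with parameter $\alpha$ is charged with every $J$ that places up to a $\Theta(\alpha)$ fraction of itself outside $C$, and the entropy cost of choosing those outside vertices, roughly $2^{c\,\alpha n\log(1/\alpha)}$, exceeds the savings $2^{\alpha\rho n}$ from the smaller container for essentially the whole range of $\alpha$. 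Concretely, at $\alpha=1/2$ (so $|C|\approx n/4$) the number of sets $J$ with $|J\cap C|\geq \tfrac34|J|$ already exceeds $2^{n/2}$ (take $|J\cap C|=n/6$ and $|J\setminus C|=n/18$, giving $\binom{n/4}{n/6}\binom{3n/4}{n/18}>2^{0.51n}$), so no case analysis can rescue the bound from the containment condition alone. The constraints you propose --- $|J|\leq(1-\alpha)\rho n/(1-\alpha/2)$ and $|J|\leq n/2+n/(2k)$ --- are both only upper bounds on $|J|$ and do not exclude this configuration ($|J|=2n/9$ there), so they do not restrict $\alpha$ in the way you need.

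The paper closes this by never using the $\alpha$-dependent fraction of $J$ outside the container. Conclusion 3 of \cref{lem:GCL} bounds that fraction by $32\alpha\log^2(8\rho/\epsilon)/\sqrt{\ell}\leq 32\log^2(8\rho/\epsilon)/\sqrt{\ell}$, uniformly in $\alpha$; taking $\ell=2\epsilon kn/d=\Theta(k^{2/3}\log^3 n)$, every fingerprint then pays the same single factor $\binom{n}{O(n\log^2 n/\sqrt{\ell})}=2^{O(n\log^3 n/k^{1/3})}$, which is just multiplied onto the answer. The $\alpha$-versus-$|F|$ tradeoff is then needed only to absorb the fingerprint enumeration $(2n)^{|F|+2}$, i.e.\ the per-fingerprint-vertex container shrinkage $\epsilon n/(c_3\log^2 n)$ must beat $\log(2n)$; this is why the paper forces $\epsilon n\gtrsim\log^3 n$, splitting into the cases $k<d^3$ and $k\geq d^3$ so that the resulting $\tfrac{\epsilon n^2}{d}$ term stays within the claimed error. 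Your argument becomes correct if you replace the fixed $(1-\alpha/2)$ containment by this uniform, $\ell$-dependent one.
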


A $d$-regular graph has edge density roughly $\frac{2d}{n},$ and using a Markov bound it is not hard to show that any $d$-regular graph has at least $\frac{1}{2}2^n$ induced subgraphs with edge density less than $\frac{4d}{n}.$
In other words, on one extreme Zhao's result \cite{zhao2010number} says that any $d$-regular graph has fewer than roughly $2^{n/2}$ induced subgraphs with density $0$ (i.e. independent sets), and on the other extreme we have that any $d$-regular graph has at least $\frac{1}{2}2^n$ induced subgraphs with edge density less than $\frac{4d}{n}.$
\cref{thm:tolerantTester} says that for $k$ and $d$ at least $\polylog(n),$ the number of sparse subgraphs with edge density $\frac{1}{k}\frac{d}{n}$ drops to nearly $2^{n/2}.$

Also, it can be shown via a simple counting argument that the graph with $\frac{n}{2d}$ disjoint copies of $K_{d,d}$ has at least $2^{\frac{n}{2}\left(1+\frac{1}{2d}+\Omega\left(\frac{\log(k)}{k}\right)\right)}$ subgraphs with density at most $\frac{1}{k} \frac{d}{n}$ (see \cref{rem:countSparseSubgraphsExample} for details).
This lower bound demonstrates that there is some $\frac{n}{\poly(k)}$ dependence in the exponent, but it remains open what the exact bound should be, and whether the graph with $\frac{n}{2d}$ disjoint copies of $K_{d,d}$ is the extremal graph.

\section{Preliminaries}
\paragraph{Notations.}

Let $G=(V,E)$ be a graph.
Throughout we use $\deg(v)$ to denote the degree of $v$ in $G,$ and for $S \subseteq V$ we use $\deg_{G[S]}(v)$ to denote the degree of $v$ in $G[S].$
For $S,T \subseteq V$ use $E(S)$ to denote the set of edges in $G$ with both endpoints in $S,$ and use $E(S,T)$ to denote the set of edges in $G$ with one endpoint in $S$ and one endpoint in $T.$
We use the notation $S_{\uparrow v}$ to denote all vertices in $S$ with more neighbours in $S$ than $v.$
In other words, \[S_{\uparrow v} = \{u \in S : |N(u) \cap S| > |N(v) \cap S|\}.\]
We define the edge density of a subgraph $S \subseteq V$ as $\frac{|E(S)|}{\binom{|S|}{2}}.$
Throughout, we say \emph{subgraph} to refer to the \emph{induced subgraph} $G[S].$
We use $\log(\cdot)$ to denote $\log_2(\cdot).$
Finally, for a set $S$ we use $P(S)$ to denote the power set of $S.$

\paragraph{Chernoff Bound.}

In order to prove \cref{thm:tolerantTester}, we also use the following form of Chernoff's bound for hypergeometric distributions (see, for example, \cite{blaisSeth}).
\begin{lemma}[Chernoff's Bound]
\label{lem:chernoff}
Let $X$ be drawn from the hypergeometric distribution $H(N,K,n)$ where $n$ elements are drawn without replacement from a population of size $N$ where $K$ elements are marked and $X$ represents the number of marked elements that are drawn.
Then for any $\vartheta \geq E[X]$,
\[
\Pr\big[X \geq \vartheta\big] \leq \exp\left(-\frac{(\vartheta-E[X])^2}{\vartheta+E[X]}\right).
\]
\end{lemma}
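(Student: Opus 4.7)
The plan is to reduce the hypergeometric case to the binomial case, then apply the standard multiplicative Chernoff bound. The reduction uses Hoeffding's classical observation (1963) that sampling without replacement is at least as concentrated as sampling with replacement: if $X \sim H(N,K,n)$ and $Y \sim \text{Bin}(n,K/N)$, then for every convex $\phi$ one has $\E[\phi(X)] \leq \E[\phi(Y)]$. Applying this to $\phi(x) = e^{tx}$ for $t \geq 0$ shows that the moment generating function of $X$ is dominated by that of $Y$, so every upper-tail bound obtained from Markov's inequality applied to $e^{tY}$ automatically transfers to $X$.

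Next I would invoke the standard multiplicative Chernoff bound for binomials: writing $\mu = \E[X] = nK/N = \E[Y]$, one has for every $\delta \geq 0$ that
\[
\Pr[Y \geq (1+\delta)\mu] \leq \exp\!\left(-\frac{\mu \delta^2}{2+\delta}\right),
\]
and by the domination above the same bound holds for $X$ in place of $Y$.

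Finally, I would translate back to the additive form stated in the lemma by setting $\vartheta = (1+\delta)\mu$, equivalently $\delta = (\vartheta-\mu)/\mu$ (which is $\geq 0$ precisely when $\vartheta \geq \mu = \E[X]$, matching the hypothesis). A short algebraic simplification gives
\[
\frac{\mu \delta^2}{2+\delta} \;=\; \frac{(\vartheta-\mu)^2/\mu}{(\vartheta+\mu)/\mu} \;=\; \frac{(\vartheta-\E[X])^2}{\vartheta+\E[X]},
\]
which is exactly the exponent claimed in the statement.

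There is no real obstacle here beyond citing the two classical ingredients correctly; the only mildly delicate point is the Hoeffding comparison step, but this is a textbook result and is already used in the exact same form in prior work (e.g.\ \cite{blaisSeth}), so I would simply reference that derivation rather than reprove it.
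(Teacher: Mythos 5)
Your derivation is correct: the Hoeffding convex-ordering reduction to the binomial, the multiplicative bound $\Pr[Y \geq (1+\delta)\mu] \leq \exp(-\mu\delta^2/(2+\delta))$, and the substitution $\delta = (\vartheta-\mu)/\mu$ do yield exactly the stated exponent. The paper itself gives no proof of this lemma (it is quoted from prior work with a citation), and your argument is the standard derivation behind that cited form, so there is nothing further to compare.
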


\section{A Tolerant Tester for the Independent Set Property}
In this section we prove \cref{thm:tolerantTester} using \cref{lem:GCL}.
The argument is roughly as follows.
The tester takes a random sample $S$ of $s \sim \frac{\rho^3 \polylog(1/\epsilon)}{\epsilon^2}$ vertices and accepts if and only if $S$ contains a set $J$ of size $\rho s$ such that $G[J]$ has at most roughly $\frac{\epsilon}{\polylog(1/\epsilon)}s^2$ edges.
The challenging part is showing that the algorithm rejects instances that are $\epsilon$-far from \rhoIndepSet, and for this we use \cref{lem:GCL}.
By \cref{lem:GCL}, if $G$ is $\epsilon$-far then there is a relatively small collection of fingerprints, and associated containers, such such that for any sparse set $J$ of size $\rho s,$ there exists a fingerprint whose vertices all come from $J,$ such that $J$ is mostly contained in the container associated with the fingerprint.
In particular, as long as we select the parameter $\ell$ in \cref{lem:GCL} to be sufficiently large, the amount of $J$ that is contained in the container is strictly more than the expected number of vertices that would be drawn from the container in a random sample $S.$
Using a Chernoff bound we can upper bound the probability that $S$ contains a sufficiently high number of vertices from any one container, then taking a union bound over all possible fingerprints, and associated containers, gives an upper bound on the probability that $S$ contains a sparse subgraph on $\rho s$ vertices.

We note that the proof of \cref{thm:tolerantTester} follows a similar union bound argument used by \cite{blaisSeth}, but using \cref{lem:GCL} instead of a graph container lemma for independent sets.
The main difference is that instead of upper bounding the probability of finding $\rho s$ vertices from a container of size less than $\rho n,$ we upper bound the probability of finding at least $(1-\alpha/2)\rho s$ vertices from a container of size $(1-\alpha)\rho n.$

\tolTester*
\begin{proof}
    Let $c_1$ be a sufficiently large constant.
    On input $G,$ we want to distinguish between the case that $G$ has a set $U \subseteq V$ of size $\rho n$ such that $G[U]$ has fewer than $\frac{\epsilon}{c_1 \log^{4}(1/\epsilon)} n^2$ edges, from the case that for every $\rho n$ set $U \subseteq V,$ $G[U]$ has at least $\epsilon n^2$ edges.
    To do this, the tester takes a uniformly random set $S$ of $s = c_2 \frac{\rho^3 \log^5(1/\epsilon)}{\epsilon^2}$ vertices without replacement, inspects all possible edges in $G[S],$ and accepts if and only if there is a set $J \subset S$ of size $\rho s$ such that $G[J]$ has at most $2\frac{\epsilon}{c_1 \log^4(1/\epsilon)}s^2$ edges, where $c_2$ is a sufficiently large constant.\footnote{Here and throughout we ignore integer rounding issues as they do not affect the asymptotics of the results.}

    First, we wish to show that the algorithm accepts with probability at least $1/4$ when $G$ contains a sparse subgraph of size $\rho n.$
    In this case, let $U \subseteq V$ be a set of size $\rho n$ such that $G[U]$ has fewer than $\frac{\epsilon}{c_1 \log^{4}(1/\epsilon)} n^2$ edges.
    The probability that $S$ contains $\rho s$ vertices $J$ that form a sparse subgraph $G[J]$ is at least the probability that $S$ contains $\rho s$ vertices $J$ from $U$ that form a sparse subgraph.
    First observe that with probability at least $\frac12,$ $S$ contains at least $\rho s$ vertices from $U$ since the number of vertices drawn from $U$ follows a hypergeometric distribution, and the median of the distribution is at least $\rho s$~\cite{Neumann70,KaasBurhman1980}.
    Next, observe that if $J$ is a random subset of $U$ of size $\rho s,$ then the expected number of edges in $G[J]$ is at most $\frac{\epsilon}{c_1 \log^{4}(1/\epsilon)} n^2 \frac{|J|^2}{\rho^2 n^2}=\frac{\epsilon}{c_1 \log^{4}(1/\epsilon)} s^2.$
    Hence, using Markov's Inequality, for such a random subset $J \subseteq U$ the probability that $G[J]$ has more than $2 \cdot \frac{\epsilon}{c_1 \log^{4}(1/\epsilon)} s^2$ edges is at most $1/2.$
    In other words, the probability that $S$ contains $\rho s$ vertices $J$ from $U$ for which $G[J]$ has at most $2 \cdot \frac{\epsilon}{c_1 \log^{4}(1/\epsilon)} s^2$ edges, given that it contains at least $\rho s$ vertices from $U,$ is at least $\frac12.$
    Combining, we find that the probability that $S$ contains at least $\rho s$ vertices from $U$ \emph{and} $S$ contains $\rho s$ vertices $J$ from $U$ for which $G[J]$ has at most $\frac{\epsilon}{c_1 \log^{4}(1/\epsilon)} s^2$ edges, is at least $\frac12 \cdot \frac12 = \frac14.$

    Now, consider the case that $G$ is $\epsilon$-far from having a $\rho n$ independent set.
    We wish to show that the algorithm accepts with only very small probability.
    First, let $\ell = \frac{c_1 \log^4(1/\epsilon)}{2}.$ 
    Since $c_1$ is a sufficiently large constant, then, by \cref{lem:GCL}, there exists an absolute constant $c_3,$ a collection $\mathcal{F}$ of fingerprints, and a container function $C: \mathcal{F}\rightarrow P(V)$ such that for any $J \subseteq V$ of size $\rho s$ for which $G[J]$ has fewer than $\frac{\epsilon}{\ell \rho^2}|J|^2 = \frac{\epsilon}{\ell}s^2$ edges there exists $(F,R) \in \mathcal{F}(J) \cap \mathcal{F}$ and $\alpha$ such that $|C(F,R)| = (1-\alpha)\rho n,$ $|F| \leq \frac{c_3 \alpha \rho^2 \log^2(1/\epsilon)}{\epsilon}$ and $|J \cap C(F,R)| \geq \left(1-\alpha/2\right)|J|.$
    This implies that for any such sparse subgraph $G[J],$ there exists $(F,R) \in \mathcal{F}(J) \cap \mathcal{F}$ such that
    \[\frac{|J \cap C(F,R)|}{|J|} - \frac{|C(F,R)|}{\rho n} \geq \frac{\epsilon}{2c_3 \rho^2 \log^2(1/\epsilon)} \cdot |F|.\]
    
    Hence, the probability that $S$ contains such a sparse subgraph $G[J]$ is at most
    \begin{equation}
        \label{eqn:tolTesterUpperBound1}
        \sum_{(F,R) \in \mathcal{F}} \Pr\left[(F,R) \in \mathcal{F}(S) \textrm{ and } |S \cap C(F,R)| \geq \left(\frac{\epsilon}{2c_3 \rho^2 \log^2(1/\epsilon)} \cdot |F| + \frac{|C(F,R)|}{\rho n}\right) \rho s\right].
    \end{equation}

    Now, for any $(F,R) \in \mathcal{F},$ let $F_{\rm vert}$ be the set of vertices appearing in the sequence $F$ and the revision $R,$ and observe that $(F,R) \in \mathcal{F}(S)$ implies that $F_{\rm vert} \subseteq S.$
    Then, by breaking up the sum in \cref{eqn:tolTesterUpperBound1} by fingerprints $(F,R)$ for which the size of $F_{\rm vert}$ is $f=1,\dots,f_{\max} = \frac{c_3 \rho^2 \log^2(1/\epsilon)}{\epsilon}+1,$ we can upper bound the probability by 

    \[\sum_{f=1}^{f_{\max}} \sum_{\substack{(F,R) \in \mathcal{F} \\ |F_{\rm vert}| = f}} \Pr[F_{\rm vert} \subseteq S] \cdot \Pr\left[|S \cap C(F,R)| \geq \left(\frac{\epsilon}{2c_3 \rho^2 \log^2(1/\epsilon)} \cdot |F| + \frac{|C(F,R)|}{\rho n}\right) \rho s ~\Big|~ F_{\rm vert} \subseteq S\right].\]

    Now, for any $(F,R),$ if $|F_{\rm vert}|=f$ then \[\Pr[F_{\rm vert} \subseteq S]=\frac{\binom{n-f}{s-f}}{\binom{n}{s}} = \frac{\binom{s}{f}}{\binom{n}{f}},\]
    because there are at most $\binom{n}{s}$ ways to choose $S,$ and $\binom{n-f}{s-f}$ ways to choose an $S$ that includes $F_{\rm vert}.$
    Next, we can upper bound the probability that $S$ contains at least $\left(\frac{\epsilon}{2c_3 \rho^2 \log^2(1/\epsilon)} \cdot |F| + \frac{|C(F,R)|}{\rho n}\right) \rho s$ vertices from $C(F,R),$ conditioned on the fact that $S$ contains $F_{\rm vert},$ using a Chernoff bound.
    In particular, let $X$ be a random variable to denote the number of vertices among the other $s-f$ sampled vertices that belong to $C(F,R)$, and so $E[X] = \frac{|C(F,R)|}{n} (s-f) < \frac{|C(F,R)|}{\rho n}\rho s.$   
    By the Chernoff bound in \Cref{lem:chernoff}, the probability that $S$ contains at least $\left(\frac{\epsilon}{2c_3 \rho^2 \log^2(1/\epsilon)} \cdot |F| + \frac{|C(F,R)|}{\rho n}\right) \rho s - f$ vertices from $C(F,R)$ in the other $s - f$ vertices drawn to form $S$ is
    \begin{align*}
        \Pr\left[ X \ge \left(\frac{\epsilon}{2c_3 \rho^2 \log^2(1/\epsilon)} \cdot |F| + \frac{|C(F,R)|}{\rho n}\right) \rho s - f\right]
        & \le \exp\left(\frac{-\left(\frac{\epsilon}{2c_3 \rho^2 \log^2(1/\epsilon)} \cdot |F| \cdot \rho s - f\right)^2}{2\rho s}\right)\\ 
        & \leq \exp\left(-\frac{\epsilon^2 |F|^2 s}{32 c_3^2 \rho^3 \log^4(1/\epsilon)} \right),
    \end{align*}
    where the last inequality uses that $f \leq |F|+1$ and $\frac{\epsilon}{4c_3\rho^2 \log^2(1/\epsilon)} \rho s \geq 1$ because $s$ is sufficiently large.

    Substituting back into the probability bound above, the probability that $S$ contains a $\rho s$ set $J$ for which $G[J]$ has fewer than $\frac{\epsilon}{\ell}s^2$ edges is at most 
    \[
        \sum_{f=1}^{f_{\max}} \sum_{\substack{(F,R) \in \mathcal{F} \\ |F_{\rm vert}| = f}} \frac{\binom{s}{f}}{\binom{n}{f}} ~ \exp\left(-\frac{\epsilon^2 |F|^2 s}{32 c_3^2 \rho^3 \log^4(1/\epsilon)} \right)
        \leq  \sum_{f=1}^{f_{\max}}\frac{1}{\binom{n}{f}}\sum_{\substack{(F,R) \in \mathcal{F} \\ |F_{\rm vert}| = f}}  \exp\left(f \log s -\frac{\epsilon^2 |F|^2 s}{32 c_3^2 \rho^3 \log^4(1/\epsilon)} \right),
    \]
    where the inequality uses the upper bound $\binom{s}{f} \leq s^f \leq \exp (f \log(s)).$
    Hence, since $s = c_2 \frac{\rho^3 \log^5(1/\epsilon)}{\epsilon^2}$ for a sufficiently large constant $c_2,$ and $f \leq |F|+1,$ this can be upper bounded by
    \[
        \sum_{f=1}^{f_{\max}}\frac{1}{\binom{n}{f}}\sum_{\substack{(F,R) \in \mathcal{F} \\ |F_{\rm vert}| = f}}  \exp\left(-4|F| \log(s)\right).
    \]

    We again split the sum up into fingerprints $(F,R)$ such that $|F|=k,$ for $k=f, \cdots, f_{\max},$ and use that there are at most $\binom{n}{f} 2^k f^k \cdot f \cdot k \leq \binom{n}{f} \exp(2k\log(k))$ possible $(F,R)$ pairs such that $|F_{\rm vert}|=f$ and $|F|=k$ to upper bound the probability by 
    \[
        \sum_{f=1}^{f_{\max}}\frac{1}{\binom{n}{f}}\sum_{k=f}^{f_{\max}}\sum_{\substack{(F,R) \in \mathcal{F} \\ |F_{\rm vert}| = f \\ |F| = k}}  \exp\left(-4 k \log(s)\right) \leq \sum_{f=1}^{f_{\max}}\sum_{k=f}^{f_{\max}} \exp\left(2k \log(k) -4k \log(s)\right) \leq \sum_{f=1}^{f_{\max}}\sum_{k=f}^{f_{\max}} \exp\left(-2k \log(s)\right).
    \]
    
    Finally, using that the sum is upper bounded by the case when $k=1,$ we find that the probability that $S$ contains a $\rho s$ set $J$ for which $G[J]$ has fewer than $\frac{\epsilon}{\ell}s^2$ edges is at most
    \[
        f_{\max}^2 \exp\left(-2 \log(s)\right) \leq \exp(-\log(s)) \leq \frac{1}{s} \ll \frac{1}{8},
    \]
    where the first inequality uses that $f_{\max}^2 \leq s.$
\end{proof}

\begin{remark}
    The proof of \cref{thm:tolerantTester} shows that the tester accepts graphs that are $\widetilde{\Omega}(\epsilon)$-close to \rhoIndepSet with probability at least $1/4$ and accepts graphs that are $\epsilon$-far from \rhoIndepSet with probability at most $1/8.$
    With standard error reduction techniques it is possible to boost the probabilities so that the algorithm answers correctly with probability at least $2/3$ without changing the asymptotic sample complexity.  
\end{remark}

\section{Graph Container Lemma for Sparse Subgraphs}
In this section we prove \cref{lem:GCL}.
First we give a proof overview.
Then, in \cref{sect:newProcedures} we give the new fingerprint and container generating procedures.
In \cref{sect:initialContainment,sect:revisionLemma} we prove a number of lemmas that we need to prove \cref{lem:GCL}, before finally proving \cref{lem:GCL} in \cref{sect:proofGCL}.

\subsection{Proof Overview}
\label{sect:overview}
For a graph $G=(V,E)$ that is $\epsilon$-far from \rhoIndepSet, we prove \cref{lem:GCL} by constructing a set $\mathcal{F} \subseteq \mathcal{F}(V)$ of fingerprints, and showing that the collection satisfies the desired conclusions.
In particular, we give a container function $C:\mathcal{F}(V) \rightarrow P(V),$ and for every set $J$ such that $G[J]$ is sparse we describe how to construct a fingerprint $(F,R)$ such that $C(F,R)$ satisfies the conclusions of the lemma: $|C(F,R)|$ is not too large, $|F|$ is not too large, and $C(F,R)$ contains a relatively large amount of $J.$
Then the collection $\mathcal{F}$ is is the set of all fingerprints formed in this way.
In what follows we briefly describe the construction of $F$ and $R,$ what the container function is, and the key ideas towards showing that the constructions satisfy the desired conclusions of the lemma.
Before doing that, first recall that we define a fingerprint as a pair $(F,R)$ such that $F$ is a sequence of vertices along with instructions from $\{\uparrow,\downarrow\},$ and $R$ is a pair $(i,v)$ where $i \in [|F|]$ and $v$ is a vertex (\cref{def:fingeprint}).
We refer to $F$ as the fingerprint sequence and $R$ as the revision.

Let $\ell = \polylog(1/\epsilon).$
For any subgraph $G[J]$ with edge density at most $\frac{\epsilon}{\ell \rho^2}$ (note that if $G$ is $\epsilon$-far from \rhoIndepSet then the edge density of $G$ is at least $\frac{\epsilon}{\rho^2}$), we construct $F$ through an iterative greedy procedure (\cref{alg:Fingerprint}).
First, initialize a container to $V,$ then, at each step, select a vertex $v\in J$ and an operation from $\uparrow$ or $\downarrow,$ where $\uparrow$ corresponds to removing higher degree vertices than $v$ and $\downarrow$ corresponds to removing neighbours of $v$, such that running the operation on the current container removes a large number of vertices in the container and removes a relatively small number of vertices from $J$ in the process.
In particular, in each step the procedure selects the vertex and operation that \emph{maximizes} the ratio of vertices removed from the container divided by the number of vertices of $J$ removed from the container.
We add the pair containing the vertex $v$ and the operation from $\{\uparrow,\downarrow\}$ to the fingerprint sequence $F$, run the operation on the container, and repeat this process until we reach a container with fewer than $\epsilon n^2/4$ edges.
This final container, which for now we call $C_{F},$ can be viewed as a first ``guess" at a container for $J.$

Our first lemma, \cref{lem:weakContainment}, shows that $C_{F}$ already satisfies some useful properties.
In particular, we show that at each step of the procedure described above, the ratio of vertices removed from the current container divided by the vertices removed from $J$ is at least roughly $\sqrt{\ell} \cdot \frac{n}{|J|}.$\footnote{Note that in the full proof this ratio depends on the size of the current container instead of $n,$ but for simplicity here we assume that the current container is of size roughly $n.$}
In other words, if $\frac{1}{\sqrt{\ell}}|J|$ vertices of $J$ are removed from the container, then roughly $n$ vertices are removed from the container, and since the initial container size is of size $n,$ we have that $C_{F}$ contains at least roughly $(1-\frac{1}{\sqrt{\ell}})|J|$ of $J.$

To develop some intuition on why this is the case consider the following example.
Let $G=G(n,\frac{4\epsilon}{\rho^2})$ be the random graph on $n$ vertices where each edge occurs with probability $\frac{4\epsilon}{\rho^2}.$
With high probability this graph is $\epsilon$-far from having a $\rho n$ independent set.
Let $J$ be a subset of vertices such that the edge density of $G[J]$ is at most $\frac{\epsilon}{\ell \rho^2}.$
Consider the first step of the procedure described above.
Observe that the average degree in $G[J]$ is less than $\frac{\epsilon}{\ell \rho^2}|J|,$ hence, the procedure can select a vertex $v \in J$ with at most the average degree in $G[J]$, and the operation $\downarrow$ to remove neighbours, and doing so removes at least roughly $\frac{4\epsilon}{\rho^2}n$ vertices from the container while removing fewer than $\frac{\epsilon}{\ell \rho^2}|J|$ vertices of $J.$
This gives a removal ratio of roughly $4 \ell \cdot \frac{n}{|J|}.$

In order to show the $\sqrt{\ell} \cdot \frac{n}{|J|}$ ratio in general, we require a general lemma (\cref{lem:degreeThresholdExistence}) about the degrees of vertices in a dense graph.
This lemma shows that for any subset $U \subseteq V$ there exists a degree threshold $d$ such that at least roughly $|E(U)|/d$ vertices in $U$ have degree larger than $d$ in $U.$
Hence, in each step of the procedure, at least roughly $\frac{1}{d}\frac{\epsilon}{\rho^2}n^2$ vertices in the current container have degree at least $d.$
If there exists a vertex $v \in J$ with degree less than $\frac{1}{\sqrt{\ell}}\frac{d}{n} |J|$ in $G[J]$ and degree more than $d$ in the current container, then selecting it and operation $\downarrow$ removes $\sqrt{\ell} \cdot \frac{n}{|J|}$ times more vertices from the container than from $J.$
Otherwise, there are only at most $\frac{\epsilon}{\sqrt{\ell}\rho^2}\frac{n}{d}|J|$ vertices in $J$ with degree larger than $d$ in the current container based on the sparsity of $G[J].$
Hence, selecting the vertex in $J$ with maximum degree in the current container that is less than $d,$ and operation $\uparrow,$ again removes roughly $\sqrt{\ell} \cdot \frac{n}{|J|}$ times more vertices from the container than from $J.$

In order to prove the stronger containment properties of \cref{lem:GCL}, we start by letting $r \cdot \frac{n}{|J|}$ be the smallest ratio achieved throughout the running of the procedure (by \cref{lem:weakContainment} we have that $r \geq \sqrt{\ell},$ but in many cases it may be much larger).
Generalizing the argument above, we can show that $C_{F}$ contains at least roughly $(1-\frac{1}{r})|J|$ of $J.$
Further, we can also show that the maximum size of the fingerprint is roughly $\frac{\sqrt{\ell}}{r}\frac{\rho^2}{\epsilon}$ (note: in order to show this we require that the greedy procedure described above only selects fingerprint vertices that remove some minimum threshold of vertices in the container).
This means that if $|C_{F}|$ is smaller than roughly $(1-\frac{\sqrt{\ell}}{r})\rho n$ then we have the desired bounds in the conclusions of the \cref{lem:GCL} and so no revision is necessary and we set $R=\bot.$

In the case that $|C_{F}|$ is larger than $(1-\frac{\sqrt{\ell}}{r})\rho n,$ we show (in \cref{lem:revision}) that the container $C_{F}$ can be revised to $C_{F}'$ by selecting a vertex $v \in J$ and removing all vertices from $C_{F}$ that had higher degree than $v$ at some time step $i$ of the procedure, and importantly, this revised container $C_{F}'$ satisfies the desired conclusions of the lemma.
We give more details below on how to select such a vertex, but before that we show how this implies \cref{lem:GCL}.
We make two important observations.
First, even though $C_F$ is constructed through a procedure that depends on $J,$ $C_{F}$ can be reconstructed solely from $F:$ initialize a container as $V$ and then iteratively run the operations from $F.$
Also, $C_{F}'$ can be constructed solely from $F$ and $R:$ construct $C_{F}$ and then, if $R =(i,v),$ revise $C_{F}$ by removing all vertices that had higher degree than $v$ at step $i.$
This is exactly what our container function (\cref{alg:Container}) does on input $(F,R).$
To summarize, for any sparse subgraph $G[J]$ we construct a fingerprint $(F,R)$ such that $C(F,R)$ satisfies the conclusions of the lemma, and the collection $\mathcal{F}$ is is the set of all fingerprints formed in this way.

\paragraph{Selecting a Revision Vertex.}
Consider the time step $i$ of the procedure that achieved the smallest ratio $r \cdot \frac{n}{|J|}$ of vertices removed from the container divided by vertices of $J$ removed from the container, and let the container at that time step be $D.$
To illustrate the key ideas, for now we will assume that $|D| \approx n.$
If $|C_{F}|$ is large then we can make an important observation about the structure of $D$ because $C_F \subseteq D$ and $G[C_F]$ is sparse.
In particular, we can show that either $|E(C_F,D \setminus C_F)|$ or $|E(D\setminus C_F)|$ is relatively large because $G$ is $\epsilon$-far from \rhoIndepSet.
In the case that $|E(D\setminus C_F)|$ is large, we find that, at time step $i,$ the fingerprint procedure should have been able to find a vertex achieving ratio larger than $r \cdot \frac{n}{|J|},$ which is a contradiction.
In the case that $|E(C_F,D\setminus C_F)|$ is large, we show that many vertices in $C_F$ have high degree in $G[D],$ and by removing all vertices in $C_F$ with high degree in $G[D]$ we remove a lot of $C_F$ without removing too much of $J.$

In slightly more detail, if $|C(F)| > (1-\frac{\sqrt{\ell}}{r})\rho n$ then we show that either $|E(C_F,D \setminus C_F)| \gtrsim \frac{\epsilon r}{\sqrt{\ell}\rho^2} |C_F| \cdot n$ or $|E(D\setminus C_F)| \gtrsim \frac{\epsilon r^2}{\ell \rho^2} n^2$ because otherwise we can add a random set of $\frac{\sqrt{\ell}}{r}\rho n$ vertices to $C_F$ without adding too many edges and find that $G$ is not $\epsilon$-far from \rhoIndepSet.

In order to use these lower bounds on the edges, we again use \cref{lem:degreeThresholdExistence} to argue about the degrees of vertices in dense subgraphs.
In particular, by \cref{lem:degreeThresholdExistence}, in the case that there are a lot of edges in $E(D\setminus C_F)$ there exists a degree threshold $d$ such that at least roughly $\frac{1}{d} \frac{\epsilon r^2}{\ell \rho^2} n^2$ vertices in $D \setminus C_F$ have degree larger than $d$ in $G[D].$
If there exists a vertex in $J$ with degree at least $d$ in $G[D]$ and degree smaller than roughly $\frac{1}{r} \frac{d}{n} |J|$ in $G[J]$ then the fingerprint procedure could have selected this vertex and the operation $\downarrow$ to remove $r \frac{n}{|J|}$ times more vertices from the container than from $J,$ which is a contradiction since the ratio at this time step was $r \cdot \frac{n}{|J|}.$
Otherwise, all the vertices in $J$ with degree at least $d$ in $G[D]$ have degree at least roughly $\frac{1}{r}\frac{d}{n} |J|$ in $G[J],$ and so there can only be at most $\frac{\epsilon r}{\ell \rho^2} \frac{n}{d} |J|$ of them based on the sparsity of $G[J].$
In this case, the fingerprint procedure could have selected the vertex in $J$ with highest degree in $D$ that is less than $d,$ and the operation $\uparrow$ to remove $r \frac{n}{|J|}$ times more vertices from the container than from $J,$ which again is a contradiction.

Since the first case leads to contradiction, it must be the case that $|E(C_F,D\setminus C_F)|$ is large.
In this case, we again use \cref{lem:degreeThresholdExistence} to find that there is a degree threshold $d$ such that at least roughly $\frac{1}{d}\frac{\epsilon r}{\sqrt{\ell}\rho^2} |C_F| n$ vertices in $C$ have degree larger than $d$ in $G[D].$
Using a similar argument, there can only be at most $\frac{\epsilon r}{\rho^2 \ell}\frac{n}{d} |J|$ vertices in $J$ with degree larger than $d$ in $G[D]$ otherwise the fingerprint procedure should have achieved a larger removal ratio, and so by removing vertices from $C_F$ with degree larger than $d$ in $G[D]$ we remove roughly $\sqrt{\ell} \cdot \frac{|C_F|}{|J|}$ times more vertices from $C_F$ than vertices from $J.$

\subsection{Fingerprint and Container Procedures}
\label{sect:newProcedures}
The greedy procedure used to construct the fingerprint sequence $F$ for a sparse set $J$ is in \cref{alg:Fingerprint}.
We initialize the container to $V,$ and the fingerprint to an empty sequence.
Then, in each step, we consider each $v \in J$ and the two possible operations of ``remove neighbours" or ``remove higher degree vertices".
For the first type of operation, the number of vertices that can be removed from the current container $C_{t-1}$ is $\deg_{G[C_{t-1}\cup \{v\}]}(v),$ where we have to use $C_{t-1}\cup \{v\}$ because $v$ may not be in $C_{t-1},$ and the number of vertices that would be removed from $J$ is at most $\deg_{G[J]}(v).$
Similarly, for the second type of operation, the vertices that can be removed from the current container $C_{t-1}$ are $\{u \in C_{t-1} : |N(u) \cap C_{t-1}| > |N(v) \cap C_{t-1}|\},$ which from now on we will denote by $(C_{t-1})_{\uparrow v},$ and the number of vertices in $J$ that would be removed in the process is at most $|(C_{t-1})_{\uparrow v} \cap J|.$
We desire the maximum ratio between the number of vertices removed in $C_{t-1}$ and the number of vertices removed in $J,$ however there is a small challenge.
There may be a vertex in $J$ that maximizes this ratio, but only removes a very small number of vertices in $J$ (say just $1$ vertex) and a very small number of vertices in $C_{t-1},$ and this would cause the procedure to take too long.
To handle this, we take the maximum of the vertices removed in $J$ with a threshold $\frac{\epsilon}{\sqrt{\ell}\rho^2}|J|.$
We select this threshold to give the desired trade offs in the fingerprint size and the amount of $J$ that is contained in the final container.
Formally, we define the maximal removal ratio as follows.

\begin{definition}
    \label{def:MRR}
    Given a graph $G=(V,E),$ subsets of vertices $J,C \subseteq V$ and parameters $\epsilon,\rho,\ell,$ define the \emph{maximum removal ratio}, denoted ${\rm MRR}_{G}(J,C),$ as \[ {\rm MRR}_{G}(J,C) \coloneqq \max_{v \in J} \max\left(\frac{\deg_{G[C \cup \{v\}]}(v)}{\max(\deg_{G[J]}(v),\frac{\epsilon}{\sqrt{\ell} \rho^2}|J|)},\frac{|C_{\uparrow v}|}{\max(|C_{\uparrow v} \cap J|,\frac{\epsilon}{\sqrt{\ell} \rho^2}|J|)}\right).\]
    While this quantity depends on $\epsilon,\rho$ and $\ell,$ we suppress this dependence in the notation ${\rm MRR}_G(J,C)$ for clarity.
\end{definition}

Overall, in each step of the fingerprint procedure we select the vertex $v \in J$ and operation that achieves the maximum removal ratio, add it to the fingerprint and run the operation on the container, and repeat.

\begin{algorithm}
    \caption{\sc{Fingerprint Generator}}
    \label{alg:Fingerprint}
    \smallskip
    \KwIn{A graph $G=(V,E),$ and $J \subseteq V$ for which $G[J]$ has fewer than $\frac{\epsilon}{\ell \rho^2}|J|^2$ edges}
    \medskip
    Initialize $t=0, F \gets ()$ and $C_0 \gets V$\;

    \While{$G[C_t]$ has more than $\epsilon n^2/4$ edges} {
        $t \gets t+1$\;
        $v_t \gets$ $\argmax_{v \in J} \max\left(\frac{\deg_{G[C_{t-1} \cup \{v\}]}(v)}{\max(\deg_{G[J]}(v),\frac{\epsilon}{\sqrt{\ell} \rho^2}|J|)},\frac{|(C_{t-1})_{\uparrow v}|}{\max(|(C_{t-1})_{\uparrow v} \cap J|,\frac{\epsilon}{\sqrt{\ell} \rho^2}|J|)}\right)$\;
        \If{$v_t$ comes from the first type}{
            $C_t \gets C_{t-1} \setminus N(v_t)$\;
            Append $(v_t,\downarrow)$ to $F$\;
        }
        \Else{
            $C_t \gets C_{t-1} \setminus (C_{t-1})_{\uparrow v_t}$\;
            Append $(v_t,\uparrow)$ to $F$\;
        }
    }
    Return $F$\;
\end{algorithm}

\begin{algorithm}
    \caption{\sc{Container Generator}}
    \label{alg:Container}
    \smallskip
    \KwIn{A graph $G=(V,E),$ and fingerprint sequence $F$ and revision $R$}
    \medskip
    Initialize $C_0 \gets V$\;

    \For{$t=1,2,\dots,|F|$} {
        \If{$F[t]=(v,\downarrow),$ for some $v \in V$}{$C_t \gets C_{t-1} \setminus N(v)$}
        \Else{$C_t \gets C_{t-1} \setminus (C_{t-1})_{\uparrow v}$}
    }

    \If{$R=(i,u)$} {
        Return $C_{|F|} \setminus (C_i)_{\uparrow u}$
    }
    \Else{
        Return $C_{|F|}$\;
    }
\end{algorithm}

The first part of the container function, \cref{alg:Container}, essentially simulates the container produced in the fingerprint procedure.
In particular, given a fingerprint sequence $F$ it repeatedly runs the operations on the container described by the fingerprint, and eventually finds container $C_{|F|}.$
Afterwards, if a revision pair $(i,u)$ is given, then returns $C_{|F|} \setminus (C_i)_{\uparrow u},$ and otherwise returns $C_{|F|}.$

\subsection{An Initial Containment Lemma}
\label{sect:initialContainment}
In this section we prove that \cref{alg:Fingerprint} makes reasonable progress in every step, which also implies an initial weak containment property on the final container constructed in \cref{alg:Fingerprint}.
This will be used as a starting point in the proof of the graph container lemma (\cref{lem:GCL}) in \cref{sect:proofGCL}.

Before we state and prove the lemma, we require the following \cref{lem:degreeThresholdExistence} which says that if there are subsets of vertices $U,W$ in $V$ such that $|E(U,W)|$ is large, then there are a relatively large number of vertices in $U$ that have a relatively large number of neighbours in $W.$
In particular, there is a threshold $d \leq |W|$ such that at least roughly $|E(U,W)|/d$ vertices in $U$ have more than $d$ neighbours in $W.$
\cref{lem:degreeThresholdExistence} will be used in the proof of \cref{lem:weakContainment} as well as \cref{lem:revision} in the next section.

\begin{lemma}
    \label{lem:degreeThresholdExistence}
    Let $G=(V,E)$ be a graph and let $U,W \subseteq V.$
    If $|E(U,W)| \geq m_{UW}$ then there exists $\frac{m_{UW}}{2|U|} \leq d \leq |W|$ such that the number of vertices in $U$ with more than $d$ neighbours in $W$ is at least \[\frac{m_{UW}}{4d \log(2|U||W|/m_{UW})}.\]
\end{lemma}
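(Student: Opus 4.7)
The plan is to apply a dyadic decomposition of the vertices in $U$ according to their degree into $W$. For each $u \in U$, write $d_u = |N(u) \cap W|$, so that $\sum_{u \in U} d_u = |E(U,W)| \geq m_{UW}$. Setting $T = m_{UW}/(2|U|)$, the vertices with $d_u \leq T$ contribute at most $T|U| = m_{UW}/2$ to this sum, hence the vertices with $d_u > T$ contribute at least $m_{UW}/2$ in total.

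Next I would partition $\{u \in U : d_u > T\}$ into dyadic buckets $B_j = \{u \in U : 2^{j-1} T < d_u \leq 2^j T\}$ for $j = 1, 2, \ldots, J$, where $J$ is the smallest integer with $2^J T \geq |W|$. Since $|W|/T = 2|U||W|/m_{UW}$, we have $J \leq \log(2|U||W|/m_{UW}) + O(1)$. By pigeonhole, some bucket $B_{j^\ast}$ satisfies $\sum_{u \in B_{j^\ast}} d_u \geq (m_{UW}/2)/J$, and since every vertex in $B_{j^\ast}$ has $d_u \leq 2^{j^\ast} T$, this forces $|B_{j^\ast}| \geq \frac{m_{UW}}{2 J \cdot 2^{j^\ast} T}$.

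Finally I would set $d = \lfloor 2^{j^\ast - 1} T \rfloor$. By construction, $d \geq T/2 \geq m_{UW}/(2|U|)$ (with a small adjustment to $T$ to absorb the rounding), $d \leq |W|$ by the choice of $J$, and every vertex in $B_{j^\ast}$ has $d_u > 2^{j^\ast-1}T \geq d$. Substituting $2^{j^\ast} T \leq 2(d+1)$ into the bound on $|B_{j^\ast}|$ produces at least $\frac{m_{UW}}{O(1)\cdot d\log(2|U||W|/m_{UW})}$ vertices of $U$ with more than $d$ neighbours in $W$, which matches the claimed estimate.

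The argument is a standard dyadic pigeonhole, so no conceptual obstacle is expected. The only mild annoyance is checking that the constant in the denominator comes out to exactly $4$; this amounts to choosing the bucket boundaries as half-open intervals (so that $d_u > d$ holds with strict inequality) and verifying the small bookkeeping around $j^\ast = 1$ and the possibility that $2^{j^\ast - 1} T$ is not an integer. One should also check the degenerate case in which $m_{UW}$ is comparable to $|U||W|$ (so that $\log(2|U||W|/m_{UW})$ is $O(1)$ and the bucket count is very small), but this regime is easy since a single dyadic level already captures a constant fraction of the edges.
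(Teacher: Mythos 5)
Your proposal is correct and is essentially the paper's own argument: the paper uses the same dyadic degree decomposition with the same threshold $\frac{m_{UW}}{2|U|}$ absorbing half the edges, only phrased as a proof by contradiction (assume every dyadic threshold fails, sum the bucket contributions, and get fewer than $m_{UW}$ edges), which is the contrapositive of your pigeonhole. The integrality/constant bookkeeping you flag is treated with the same looseness in the paper (it takes $d$ non-integral and treats $\log(2|U||W|/m_{UW})$ as the exact bucket count), so nothing further is needed.
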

\begin{proof}
    For ease of presentation let $\delta=\frac{m_{UW}}{|U||W|},$ which is roughly the edge density between $U$ and $W.$
    Let $d_k=\frac{1}{2^k}|W|$ for $k=0,\dots,\log(2/\delta).$
    Observe that for all $k,$ $\frac{m_{UW}}{2|U|} \leq d_k \leq |W|.$
    We claim that there exists a $k$ such that the number of vertices in $U$ with more than $d_k$ neighbours in $W$ is at least $\frac{m_{UW}}{4\log(2/\delta) d_k}.$
    To prove the claim, we suppose this is not the case, and find a contradiction.
    For $k=1,\dots,\log(2/\delta),$ let $U_k$ be the vertices in $U$ with more than $d_k$ neighbours and most $d_{k-1}$ neighbours in $W,$ and let $U_{\log(2/\delta)+1}$ be the vertices with at most $d_{\log(2/\delta)}=\frac{m_{UW}}{2|U|}$ neighbours in $W.$
    Then $U_1,\dots,U_{\log(2/\delta)+1}$ partition $U,$ and so the number of edges in $E(U,W)$ is at most \[|U_{\log(2/\delta)+1}| \cdot \frac{m_{UW}}{2|U|} + \sum_{k=1}^{\log(2/\delta)} |U_k| \cdot d_{k-1} < \frac{m_{UW}}{2} + \sum_{k=1}^{\log(2/\delta)} \frac{m_{UW}}{4\log(2/\delta) d_k} d_{k-1} \leq m_{UW},\] where the first inequality uses the trivial bound of $|U_{\log(2/\delta)+1}| \leq |U|,$ as well as the bound $|U_k| < \frac{m_{UW}}{4 d_k \log(2/\delta)}$ for all $k,$ which holds because of the assumption that there are fewer than $\frac{m_{UW}}{4\log(2/\delta) d_k}$ vertices in $U$ that have more than $d_k$ neighbours in $W.$
    Hence, we have that $|E(U,W)| < m_{UW},$ which is a contradiction, and so there exists $\frac{m_{UW}}{2|U|} \leq d \leq |W|$ such that the number of vertices in $U$ with more than $d$ neighbours in $W$ is at least $\frac{m_{UW}}{4d \log(2|U||W|/m_{UW})}.$
\end{proof}

We can now state and prove an initial containment lemma.
We show that if $G$ is $\epsilon$-far from \rhoIndepSet, then every step of \cref{alg:Fingerprint} makes progress, and we use this to give an initial upper bound on the size of the fingerprint sequence and an initial lower bound on the fraction of $J$ that is in the final container of \cref{alg:Fingerprint}.

\begin{lemma}
    \label{lem:weakContainment}
    Let $\epsilon \leq \rho^2/2$ and $\ell \geq 256.$ Let $G=(V,E)$ be $\epsilon$-far from having a $\rho n$ independent set, let $J \subseteq V$ be a set of vertices for which $G[J]$ has fewer than $\frac{\epsilon}{\ell \rho^2} |J|^2$ edges, and let $C$ be the final container in \cref{alg:Fingerprint}.
    Then for each step $t$ of the procedure, 
    \begin{equation}
        \label{eqn:weakGCLProgress1}
        {\rm MRR}_G(J,C_{t-1}) \geq \frac{\sqrt{\ell} \max(|C_{t-1}|,\rho n)}{16 \log(8\rho/\epsilon)|J|}
    \end{equation}
    and $|C \cap J| \geq \left(1-\frac{16 \log^2(8\rho/\epsilon)}{\sqrt{\ell}}\right)|J|$ and $|F| \leq \frac{16 \rho^2 \log^2(8\rho/\epsilon)}{\epsilon}.$
\end{lemma}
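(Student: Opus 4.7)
The plan is to first establish the per-step maximum removal ratio bound \cref{eqn:weakGCLProgress1}, and then aggregate it to obtain the bounds on $|F|$ and $|J \cap C|$. The per-step bound is the heart of the argument; the other two conclusions are bookkeeping once it is in hand.

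For the per-step bound at iteration $t$, the starting point is that the while loop has not terminated, so $G[C_{t-1}]$ has more than $\epsilon n^2/4$ edges. I would apply \cref{lem:degreeThresholdExistence} with $U = W = C_{t-1}$ and $m_{UW} = \epsilon n^2/2$ to obtain a threshold $d$ with $\epsilon n^2/(4|C_{t-1}|) \leq d \leq |C_{t-1}|$ such that the set $H_d \subseteq C_{t-1}$ of vertices with more than $d$ neighbours inside $C_{t-1}$ has size at least roughly $\epsilon n^2/(d \log(8\rho/\epsilon))$, using $\epsilon \leq \rho^2/2$ to absorb the log factor into $\log(8\rho/\epsilon)$. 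Writing $M = \max(|C_{t-1}|, \rho n)$ and denoting the target ratio by $r = \sqrt{\ell}M/(16 \log(8\rho/\epsilon)|J|)$, I would then split into two cases. If some $v \in J \cap H_d$ has $\deg_{G[J]}(v) \leq d|J|/(rM)$, then the $\downarrow$-operation on $v$ removes more than $d$ container vertices while removing at most $d|J|/(rM)$ vertices of $J$, giving a ratio of at least $rM/|J|$. Otherwise every $v \in J \cap H_d$ has $\deg_{G[J]}(v) > d|J|/(rM)$, and the sparsity of $G[J]$ (namely $|E(J)| < \epsilon|J|^2/(\ell\rho^2)$) forces $|J \cap H_d|$ to be small; picking $v^* \in J$ of maximum $\deg_{G[C_{t-1}]}$-degree among those of degree at most $d$, the set $(C_{t-1})_{\uparrow v^*}$ contains all of $H_d$ but intersects $J$ only inside this small set, so the $\uparrow$-operation on $v^*$ again achieves ratio $\geq rM/|J|$. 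In both cases one has to check that the threshold $\epsilon|J|/(\sqrt{\ell}\rho^2)$ inside the MRR denominator does not destroy the bound, and this is exactly where the $\max(|C_{t-1}|, \rho n)$ in the target (rather than $|C_{t-1}|$ alone) is used.

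Once \cref{eqn:weakGCLProgress1} is in hand, the other conclusions follow from two complementary accountings. Since the MRR denominator is capped from below by $\epsilon|J|/(\sqrt{\ell}\rho^2)$, every step removes at least $R_t \cdot \epsilon|J|/(\sqrt{\ell}\rho^2) \geq \epsilon M/(16\rho^2 \log(8\rho/\epsilon))$ vertices from $C_{t-1}$; multiplicative shrinkage in the regime $|C_{t-1}| \geq \rho n$ plus a short additive tail once $|C_{t-1}| < \rho n$ (using $\log(1/\rho) \leq \log(8\rho/\epsilon)$, again from $\epsilon \leq \rho^2/2$) bounds the iteration count by $|F| \leq 16\rho^2 \log^2(8\rho/\epsilon)/\epsilon$. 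For the $J$-containment, \cref{eqn:weakGCLProgress1} implies that at step $t$ at most $(|C_{t-1}| - |C_t|) \cdot |J|/(R_t M)$ vertices of $J$ are removed, so summing across steps and telescoping $\sum_t (|C_{t-1}| - |C_t|)/\max(|C_{t-1}|, \rho n) \leq \log(1/\rho) + 1 \leq 2\log(8\rho/\epsilon)$ yields $|J \cap C| \geq (1 - 16 \log^2(8\rho/\epsilon)/\sqrt{\ell})|J|$.

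The main obstacle is the per-step MRR bound, and within it the correct calibration of the case split so that both branches achieve the same target ratio with the constants that appear in the lemma. The threshold $\epsilon|J|/(\sqrt{\ell}\rho^2)$ inside the MRR definition is the main source of friction: when the chosen vertex has $G[J]$-degree below this value the ratio is capped, and one has to verify that the cap still beats the target --- which is what pushes us to replace $|C_{t-1}|$ by $\max(|C_{t-1}|, \rho n)$ once the container has shrunk below $\rho n$, and to keep careful track of the numerical constants using $\ell \geq 256$ and $\epsilon \leq \rho^2/2$.
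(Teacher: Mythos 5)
Your overall architecture matches the paper's proof (degree-threshold lemma, a case split between a $\downarrow$-move on a low-$G[J]$-degree vertex and an $\uparrow$-move on the max-degree-below-$d$ vertex, then multiplicative shrinkage above $\rho n$ plus an additive tail below it), but there is a genuine gap in how you seed the per-step argument. You instantiate \cref{lem:degreeThresholdExistence} on $U=W=C_{t-1}$ with $m_{UW}=\Theta(\epsilon n^2)$, justified only by the while-loop condition that $G[C_{t-1}]$ has more than $\epsilon n^2/4$ edges. This is too weak in the early steps. The paper instead uses $m_{UW}=\frac{\epsilon}{4\rho^2}\bigl(\max(|C_{t-1}|,\rho n)\bigr)^2$: when $|C_{t-1}|\ge\rho n$ this bound comes from the $\epsilon$-farness of $G$ (every set of size at least $\rho n$ has edge density at least $\epsilon/\rho^2$), and it exceeds your $\epsilon n^2/2$ by a factor of up to $1/\rho^2$. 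That factor is not cosmetic. In your $\downarrow$-branch, when the MRR denominator is capped at $\frac{\epsilon}{\sqrt{\ell}\rho^2}|J|$, your $d\ge \epsilon n^2/(4|C_{t-1}|)$ only yields a ratio of about $\frac{\sqrt{\ell}\rho^2 n^2}{4|C_{t-1}||J|}$, and meeting the target $\frac{\sqrt{\ell}\max(|C_{t-1}|,\rho n)}{16\log(8\rho/\epsilon)|J|}$ would require $4\rho^2\log(8\rho/\epsilon)\, n^2\ge |C_{t-1}|\max(|C_{t-1}|,\rho n)$, which fails for $|C_{t-1}|$ close to $n$ and $\rho$ small (e.g.\ $\rho=0.1$, $\epsilon=\rho^2/2$). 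The same deficit appears in your $\uparrow$-branch, where $|H_d|\approx \epsilon n^2/(d\log)$ needs to be $\approx \epsilon\max(|C_{t-1}|,\rho n)^2/(\rho^2 d\log)$ for the ratio to clear the target once $|C_{t-1}|>2\rho n$. You flag that the $\max(|C_{t-1}|,\rho n)$ in the target is what saves the threshold check, but the $\max$ actually makes the target \emph{harder} precisely in the regime where your edge bound is weakest.

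This is not a removable constant-chasing issue: the per-step bound for $t$ with $|C_{t-1}|\ge\rho n$ is exactly what drives the multiplicative shrinkage $|C_t|\le(1-\Theta(\frac{\sqrt{\ell}}{\log}\max(\beta_t,\frac{\epsilon}{\sqrt{\ell}\rho^2})))|C_{t-1}|$; with removal proportional only to $\rho^2 n^2/|C_{t-1}|$ rather than to $|C_{t-1}|$, the number of steps to shrink from $n$ to $\rho n$ and the fraction of $J$ lost both blow up by $\mathrm{poly}(1/\rho)$ factors, so the stated bounds on $|F|$ and $|C\cap J|$ would not follow. The fix is the one the paper uses: observe that $|E(C_{t-1})|\ge\frac{\epsilon}{4\rho^2}\bigl(\max(|C_{t-1}|,\rho n)\bigr)^2$, invoking $\epsilon$-farness when $|C_{t-1}|\ge\rho n$ and the loop condition when $|C_{t-1}|<\rho n$, and carry the $\max(\cdot)^2/\rho^2$ through the degree-threshold lemma. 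The rest of your aggregation (telescoping $\sum_t(|C_{t-1}|-|C_t|)/\max(|C_{t-1}|,\rho n)\le\ln(1/\rho)+O(1)$ and absorbing it via $\epsilon\le\rho^2/2$) is correct and is what the paper does.
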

\begin{proof}
Let $t$ be a time step during the procedure of \cref{alg:Fingerprint}.
We will first prove \cref{eqn:weakGCLProgress1}.
First observe that $G[C_{t-1}]$ has at least $\frac{\epsilon}{4\rho^2}\left(\max(|C_{t-1}|,\rho n)\right)^2$ edges.
When $|C_{t-1}| < \rho n$ this follows because $C_{t-1}$ has at least $\epsilon n^2/4$ edges, and when $|C_{t-1}| \geq \rho n$ this follows because $G$ is $\epsilon$-far from having a $\rho n$ independent set, and so any set of size larger than $\rho n$ has edge density at least $\frac{\epsilon}{\rho^2}.$
Hence, by applying \cref{lem:degreeThresholdExistence} with $U=W=C_{t-1},$ there exists $d$ such that $\frac{\epsilon}{8\rho^2 |C_{t-1}|}\left(\max(|C_{t-1}|,\rho n)\right)^2 \leq d \leq |C_{t-1}|$ and at least $\frac{\epsilon\left(\max(|C_{t-1}|,\rho n)\right)^2}{16 \rho^2 d \log(8\rho^2/\epsilon)}$ vertices in $C_{t-1}$ have degree larger than $d$ in $G[C_{t-1}].$

Consider the vertices in $J$ with more than $d$ neighbours in $C_{t-1},$ we consider three cases.
First, if there exists such a vertex $v \in J$ with degree smaller than $\frac{\epsilon}{\sqrt{\ell}\rho^2}|J|$ in $G[J],$ then 
\[
    \frac{\deg_{G[C_{t-1} \cup \{v\}]}(v)}{\max(\deg_{G[J]}(v),\frac{\epsilon}{\sqrt{\ell} \rho^2}|J|)} \geq \frac{d}{\frac{\epsilon}{\sqrt{\ell} \rho^2}|J|} \geq \frac{\frac{\epsilon}{8\rho^2 |C_{t-1}|}\left(\max(|C_{t-1}|,\rho n)\right)^2}{\frac{\epsilon}{\sqrt{\ell} \rho^2}|J|} \geq \frac{\sqrt{\ell} \max(|C_{t-1}|,\rho n)}{8 |J|},
    \]
and so \cref{eqn:weakGCLProgress1} holds.
Next, if there exists such a vertex $v \in J$ with degree in $G[J]$ at least $\frac{\epsilon}{\sqrt{\ell}\rho^2}|J|$ but at most $\frac{2d |J|}{\sqrt{\ell} \max(|C_{t-1}|,\rho n)},$ then again observe that 
\[
    \frac{\deg_{G[C_{t-1} \cup \{v\}]}(v)}{\max(\deg_{G[J]}(v),\frac{\epsilon}{\sqrt{\ell} \rho^2}|J|)} \geq \frac{d}{\frac{2d|J|}{\sqrt{\ell}\max(|C_{t-1}|,\rho n)}} \geq \frac{\sqrt{\ell} \max(|C_{t-1}|,\rho n)}{2 |J|},
\]
and so we again we find that \cref{eqn:weakGCLProgress1} holds.
Finally, in the case that neither of the first two cases hold, we count the number of vertices in $J$ with more than $d$ neighbours in $C_{t-1}.$
Since neither of the first two cases hold, then any such vertex has degree more than $\frac{2d |J|}{\sqrt{\ell} \max(|C_{t-1}|,\rho n)}$ in $G[J],$ and so there can be at most $\frac{\epsilon \max(|C_{t-1}|,\rho n) |J|}{\sqrt{\ell}\rho^2 d}$ such vertices because of the sparsity of $G[J].$
Let $v$ be the vertex in $J$ that has the most number of neighbours in $C_{t-1}$ that is at most $d.$
Such a vertex exists because $\frac{\epsilon \max(|C_{t-1}|,\rho n) |J|}{\sqrt{\ell}\rho^2 d} \leq |J|/2$ using the lower bound on $d$ and that $\ell \geq 256.$
Then
\[
    \frac{|(C_{t-1})_{\uparrow v}|}{\max(|(C_{t-1})_{\uparrow v} \cap J|,\frac{\epsilon}{\sqrt{\ell} \rho^2}|J|)} \geq \frac{\frac{\epsilon\left(\max(|C_{t-1}|,\rho n)\right)^2}{16 \rho^2 d \log(8\rho^2/\epsilon)}}{\max(\frac{\epsilon \max(|C_{t-1}|,\rho n) |J|}{\sqrt{\ell}\rho^2 d},\frac{\epsilon}{\sqrt{\ell} \rho^2}|J|)} \geq \frac{\sqrt{\ell} \max(|C_{t-1}|,\rho n)}{16 \log(8\rho^2/\epsilon)|J|},
\]
where the last inequality uses that $d \leq |C_{t-1}|$ in the case that $\frac{\epsilon \max(|C_{t-1}|,\rho n) |J|}{\sqrt{\ell}\rho^2 d} \leq \frac{\epsilon}{\sqrt{\ell}\rho^2}|J|.$
Using that $\log(8\rho^2/\epsilon) \leq \log(8\rho/\epsilon)$ completes the proof of \cref{eqn:weakGCLProgress1}.

Now, for any step $t$ of the procedure, let $\beta_t |J|$ be the number of vertices removed from $J$ when constructing $C_t.$
\cref{eqn:weakGCLProgress1} can be written as the following lower bound on the number of vertices removed from $C_{t-1}$ to construct $C_t:$
\begin{equation}
    \label{eqn:weakGCLProgress}
    |C_{t-1}| - |C_t| \geq \frac{\sqrt{\ell}}{16 \log(8\rho/\epsilon)} \frac{\max\left(\beta_t|J|,\frac{\epsilon}{\sqrt{\ell}\rho^2}|J|\right)}{|J|} \max(|C_{t-1}|,\rho n).
\end{equation}

Let $t^*$ be the smallest index for which $|C_{t^*}| \leq \rho n.$
Break up the procedure into three parts: when $t < t^*,$ when $t=t^*$ and $t > t^*.$
First, for each $t< t^*,$ rearrange \cref{eqn:weakGCLProgress} to 
\[
    |C_t| \leq \left(1 - \frac{\sqrt{\ell}}{16 \log(8\rho/\epsilon)} \max\left(\beta_t, \frac{\epsilon}{\sqrt{\ell}\rho^2}\right)\right)|C_{t-1}|,
\]
and by applying for every $t < t^*$ we find that 
\[
    |C_{t^*-1}|\leq \prod_{t=1}^{t^*-1} \left(1-\frac{\sqrt{\ell}}{16 \log(8\rho/\epsilon)} \max\left(\beta_t, \frac{\epsilon}{\sqrt{\ell}\rho^2}\right)\right) \cdot |C_0|.
\]
Since $|C_{t^*-1}|\geq \rho n$ and $|C_0|=n,$ and using the upper bound $(1-x) \leq e^{-x},$ we find
\begin{equation}
    \label{eqn:weakGCLFinalBound}
    \rho \leq \exp\left(-\frac{\sqrt{\ell}}{16 \log(8\rho/\epsilon)} \sum_{t=1}^{t^*-1} \max\left(\beta_t, \frac{\epsilon}{\sqrt{\ell}\rho^2}\right)\right).
\end{equation}
Using that $\max(\cdot)$ is at least as large as the first operand, \cref{eqn:weakGCLFinalBound} implies that $\sum_{t=1}^{t^*-1} \beta_t \leq \frac{16 \log(8\rho/\epsilon)\ln(1/\rho)}{\sqrt{\ell}}.$
Similarly, using that $\max(\cdot)$ is at least as large as the second operand, \cref{eqn:weakGCLFinalBound} implies that $t^* \leq \frac{16 \rho^2\log(8\rho/\epsilon) \ln(1/\rho)}{\epsilon}+1.$

In the case $t=t^*,$ \cref{eqn:weakGCLProgress} implies that $\beta_t \leq \frac{16 \log(8\rho/\epsilon)}{\sqrt{\ell}}$ because $|C_{t^*}| \geq 0.$
Finally, for each $t > t^*,$ \cref{eqn:weakGCLProgress} states that 
\[
    |C_{t-1}| - |C_t| \geq \frac{\sqrt{\ell}}{16 \log(8\rho/\epsilon)} \frac{\max\left(\beta_t|J|,\frac{\epsilon}{\sqrt{\ell}\rho^2}|J|\right)}{|J|} \rho n.
\]
Summing up over all $t> t^*$ we find
\[
    |C_{t^*}| - |C_{|F|}| \geq \sum_{t=t^*+1}^{|F|}\frac{\sqrt{\ell}}{16 \log(8\rho/\epsilon)} \frac{\max\left(\beta_t|J|,\frac{\epsilon}{\sqrt{\ell}\rho^2}|J|\right)}{|J|} \rho n.
\]
Since $|C_{t^*}| \leq \rho n$ and $|C_{|F|}| \geq 0,$ lower bounding the max by the first operand gives $\sum_{t=t^*+1}^{|F|} \beta_t \leq \frac{16 \log(8\rho/\epsilon)}{\sqrt{\ell}},$ and lower bounding the max by the second operand gives $|F|-t^* \leq \frac{16 \log(8\rho/\epsilon) \rho^2}{\epsilon}.$
Combining, we have that $\sum_{t=1}^{|F|} \beta_t \leq \frac{16 \log(8\rho/\epsilon) (\ln(1/\rho)+2)}{\sqrt{\ell}}$ and $|F| \leq \frac{16 \rho^2 \log(8\rho/\epsilon) (\ln(1/\rho)+2)}{\epsilon},$ which can be upper bounded as $\sum_{t=1}^{|F|} \beta_t \leq \frac{16 \log^2(8\rho/\epsilon)}{\sqrt{\ell}}$ and $|F| \leq \frac{16 \rho^2\log^2(8\rho/\epsilon)}{\epsilon}$ by using $\epsilon \leq \rho^2/2,$ which completes the proof of the lemma.
\end{proof}

\subsection{Revision Lemma}
\label{sect:revisionLemma}
Our next lemma shows the existence of a ``revision" in the case that the final container in \cref{alg:Fingerprint} is too large relative to the fraction of $J$ that is contained in it.
In particular, suppose $C$ is the final container in \cref{alg:Fingerprint} and $D$ is the container at some step of the procedure in \cref{alg:Fingerprint}.
This lemma roughly says that if $G[C]$ is large and sparse and there wasn't a good option for the fingerprint selection in \cref{alg:Fingerprint}, then there must be a vertex $v$ in $J$ such that there are a relatively high number of vertices in $C$ with higher degree than $v$ in $D.$
This vertex can be used to reduce the size of the container $C.$

\begin{lemma}
    \label{lem:revision}
    Let $\epsilon \leq \rho^2/2$ and let $\ell \geq 16.$
    Let $G=(V,E)$ be $\epsilon$-far from having a $\rho n$ independent set, let $J \subseteq V$ be a set of vertices for which $G[J]$ has fewer than $\frac{\epsilon}{\ell \rho^2} |J|^2$ edges, and let $C,D \subseteq V$ be subsets of vertices such that $C \subseteq D,$ $|E(C)| \leq \epsilon n^2/4,$ and $|D| \geq \rho n.$
    If for some $\rr > \frac{\sqrt{\ell}}{16\log(8\rho/\epsilon)}:$ $|C|\geq \left(1-\frac{\sqrt{\ell}}{16 \rr \log(8\rho/\epsilon)}\right)\rho n,$ and
    \begin{equation}
        \label{eqn:ratioBound}
        {\rm MRR}_G(J,D) \leq \frac{\rr \left(|D|-\left(1-\frac{\sqrt{\ell}}{16 \rr \log(8\rho/\epsilon)}\right)\rho n\right)}{|J|}
    \end{equation}
    then there exists $v \in J$ and $\gamma \geq \frac{\sqrt{\ell}}{16 \rr \log(8\rho/\epsilon)}$ such that $|C \setminus D_{\uparrow v}| \leq (1-\gamma)\rho n$ and $|D_{\uparrow v} \cap J| \leq \frac{16 \gamma \log^2(8\rho/\epsilon)}{\sqrt{\ell}} |J|.$
\end{lemma}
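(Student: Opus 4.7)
The plan is to prove the lemma by a case analysis on where the edges of $G[D]$ concentrate, powered by the $\epsilon$-farness of $G$ and constrained by the MRR hypothesis \eqref{eqn:ratioBound}. Let $\alpha \coloneqq \frac{\sqrt\ell}{16r\log(8\rho/\epsilon)}$, so the hypothesis reads $|C|\geq (1-\alpha)\rho n$. The three conceptual steps are: (1) extract a lower bound on edges in a well-chosen piece of $G[D]$ from $\epsilon$-farness; (2) rule out one case via the MRR bound; (3) use the remaining case to produce the revision vertex $v$.

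First I would use $\epsilon$-farness to show that $G$ has many edges either inside $D\setminus C$ or between $C$ and $D\setminus C$. Let $S\subseteq D\setminus C$ be uniformly random of size $\rho n - |C|$ (feasible since $|D\setminus C|\geq \rho n - |C|$). Since $|C\cup S|\geq \rho n$, $\epsilon$-farness gives $|E(C\cup S)|\geq \epsilon n^2$, so combining with $|E(C)|\leq \epsilon n^2/4$ and averaging over $S$,
\[
|E(C,D\setminus C)|\cdot \tfrac{|S|}{|D\setminus C|} \;+\; |E(D\setminus C)|\cdot \tfrac{|S|(|S|-1)}{|D\setminus C|(|D\setminus C|-1)} \;\geq\; \tfrac{3}{4}\epsilon n^2.
\]
Hence at least one of $|E(D\setminus C)|$ or $|E(C,D\setminus C)|$ meets the quantitative threshold indicated in \cref{sect:overview}.

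Next I would rule out the case that $|E(D\setminus C)|$ is large via \eqref{eqn:ratioBound}. Applying \cref{lem:degreeThresholdExistence} with $U=W=D\setminus C$ yields a degree threshold $d$ and a large set of vertices in $D\setminus C$ having more than $d$ neighbours in $D$. Then either (i) some $u\in J$ has more than $d$ neighbours in $D$ and strictly fewer than $\tfrac{1}{r}\tfrac{d}{n}|J|$ neighbours in $G[J]$, in which case its $\downarrow$-ratio inside ${\rm MRR}_G(J,D)$ exceeds $\frac{r(|D|-(1-\alpha)\rho n)}{|J|}$, contradicting \eqref{eqn:ratioBound}; or (ii) by sparsity of $G[J]$, few vertices of $J$ have more than $d$ neighbours in $D$, so the vertex of $J$ with largest $G[D]$-degree at most $d$ produces an $\uparrow$-ratio that again violates \eqref{eqn:ratioBound}. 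In either sub-case we reach a contradiction, so $|E(C,D\setminus C)|$ must be large.

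Finally, with $|E(C,D\setminus C)|$ large, I would extract the revision vertex. Applying \cref{lem:degreeThresholdExistence} with $U=C$ and $W=D\setminus C$ yields a threshold $d$ and a set $H\subseteq C$ of roughly $\frac{1}{d}\frac{\epsilon r}{\sqrt\ell\rho^2}|C|n$ vertices with more than $d$ neighbours in $D$. The same MRR-based count as in the previous paragraph forces at most $\frac{\epsilon r}{\ell\rho^2}\cdot\frac{n}{d}\cdot|J|$ vertices of $J$ to have more than $d$ neighbours in $D$. Letting $v\in J$ be the vertex of largest $G[D]$-degree at most $d$, the set $D_{\uparrow v}$ contains all of $H$, hence $|C\setminus D_{\uparrow v}|\leq |C|-|H|$, and $|D_{\uparrow v}\cap J|$ is bounded by the count just given. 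Defining $\gamma$ by $|C\setminus D_{\uparrow v}|=(1-\gamma)\rho n$, the two resulting bounds should unfold to $\gamma\geq \alpha$ and $|D_{\uparrow v}\cap J|\leq \frac{16\gamma\log^2(8\rho/\epsilon)}{\sqrt\ell}|J|$, matching the conclusion. The main obstacle I anticipate is the constant calibration across these steps: in the contradiction step the ratios produced by \cref{lem:degreeThresholdExistence} must strictly dominate the MRR bound, and in the revision step both conclusions must hold for the same $\gamma$; a secondary subtlety is uniformly handling the range $\rho n\leq |D|\leq n$, since the MRR bound depends on $|D|-(1-\alpha)\rho n$ and the $\downarrow$- versus $\uparrow$-branch that dominates can switch at the extremes.
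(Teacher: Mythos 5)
Your steps (1) and (2) match the paper: the edge dichotomy from $\epsilon$-farness via a random completion of $C$ to size $\rho n$, and the MRR-based contradiction that rules out $E(D\setminus C)$ being dense, are exactly the paper's argument (the paper runs them on a subset $Y\subseteq C$ of size exactly $\left(1-\frac{\sqrt{\ell}}{16 \rr \log(8\rho/\epsilon)}\right)\rho n$ rather than on $C$ itself, but the mechanics are the same). The gap is in your step (3), specifically the claim that the bounds ``unfold to $\gamma\geq\alpha$.'' A single application of \cref{lem:degreeThresholdExistence} to $U=C$, $W=D\setminus C$ only guarantees a set $H$ of roughly $\frac{\epsilon \rr}{\sqrt{\ell}\rho^2 d}\,\rho n\,|D\setminus C|$ high-degree vertices, and since $d$ can be as large as $|D\setminus C|$, this is only about $\frac{\epsilon \rr}{\sqrt{\ell}\rho^2}\rho n$ vertices. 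Comparing with $\alpha\rho n=\frac{\sqrt{\ell}}{16\rr\log(8\rho/\epsilon)}\rho n$, you would need $\rr^2\gtrsim\frac{\ell\rho^2}{\epsilon\log(8\rho/\epsilon)}$, which fails badly when $\rr$ is near its minimum $\frac{\sqrt{\ell}}{16\log(8\rho/\epsilon)}$ and $\epsilon\ll\rho^2$. So when $|C|$ is close to $\rho n$, removing only $H$ need not bring $|C\setminus D_{\uparrow v}|$ below $(1-\alpha)\rho n$, and your $\gamma$ can be far smaller than $\alpha$ — which breaks the first conclusion and, downstream, the fingerprint-size and containment accounting in the proof of \cref{lem:GCL}.

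The paper's missing idea is a quantifier trick: it runs the degree-threshold argument for \emph{every} subset $Y\subseteq C$ of size exactly $(1-\alpha)\rho n$, obtaining a candidate $v_Y$ for each, and takes $v$ to be the $v_Y$ of minimum degree in $D$. Then either $|J\cap D_{\uparrow v}|$ is large, in which case $\gamma$ is defined from $|J\cap D_{\uparrow v}|$ and the associated $Y$ certifies that at least $\gamma\rho n$ vertices of $C$ are removed; or $|J\cap D_{\uparrow v}|$ is small, in which case one sets $\gamma=\alpha$ and observes that every $Y$ of size $(1-\alpha)\rho n$ contains at least one vertex of degree exceeding $\deg_{G[D]}(v)$, so no such $Y$ survives in $C\setminus D_{\uparrow v}$, forcing $|C\setminus D_{\uparrow v}|<(1-\alpha)\rho n$. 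Without this (or an equivalent device), your final step does not close.
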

\begin{proof}
    Let $Y$ be any subset of $C$ of size $\left(1-\frac{\sqrt{\ell}}{16 \rr \log(8\rho/\epsilon)}\right)\rho n.$
    We first claim that for any $d \geq \frac{\epsilon \rr |D \setminus Y|}{\sqrt{\ell}\rho^2},$ the number of vertices in $J$ with more than $d$ neighbours in $D$ is at most $\frac{2\epsilon \rr |J||D \setminus Y|}{d \ell \rho^2}.$
    To show this, observe that any such vertex $v \in J$ has degree larger than $\frac{\epsilon}{\sqrt{\ell}\rho^2}|J|$ in $G[J]$ because otherwise $\frac{\deg_{G[D \cup \{v\}]}(v)}{\max(\deg_{G[J]}(v),\frac{\epsilon}{\sqrt{\ell} \rho^2}|J|)} > \frac{d}{\frac{\epsilon}{\sqrt{\ell}\rho^2}|J|} \geq \frac{\rr |D \setminus Y|}{|J|},$ which contradicts \cref{eqn:ratioBound}.
    Further, any such vertex $v \in J$ has degree larger than $\frac{d|J|}{\rr |D \setminus Y|}$ in $G[J]$ because otherwise we find that $\frac{\deg_{G[D \cup \{v\}]}(v)}{\max(\deg_{G[J]}(v),\frac{\epsilon}{\sqrt{\ell} \rho^2}|J|)} > \frac{\rr |D \setminus Y|}{|J|},$ again contradicting \cref{eqn:ratioBound}.
    Hence, since any such vertex $v \in J$ has degree larger than $\frac{d|J|}{\rr |D \setminus Y|}$ in $G[J],$ then there can be at most $\frac{2\epsilon \rr |J| |D \setminus Y|}{d \ell \rho^2}$ of them, based on the sparsity of $J,$ which completes the proof of the claim.
    We also note that $\frac{2\epsilon \rr |J| |D \setminus Y|}{d \ell \rho^2} \leq |J|/2$ using the lower bound on $d$ and that $\ell \geq 16,$ and so there always exists $v \in J$ with at most $d$ neighbours in $D.$
    
    We now claim that $|E(Y,D\setminus Y)| > \frac{4 \log(8\rho/\epsilon) \epsilon \rr}{\sqrt{\ell}\rho^2} \rho n |D \setminus Y|.$
    Before proving this, we show how this implies the lemma.
    Apply \cref{lem:degreeThresholdExistence} (with $U=Y$ and $W=D \setminus Y$) to find there exists $d_Y \in \left[\frac{2\log(8\rho/\epsilon) \epsilon \rr}{\sqrt{\ell}\rho^2 |Y|} \rho n \cdot |D \setminus Y|, |D\setminus Y|\right]$ such that at least $\frac{\epsilon \rr}{\sqrt{\ell}\rho^2 d_Y} \rho n \cdot |D \setminus Y|$ vertices in $Y$ have degree larger than $d_Y$ in $G[D].$
    Let $v_Y$ be the vertex in $J$ that has largest degree in $G[D \cup \{v_Y\}]$ less than $d_Y.$
    Since $d_Y \geq \frac{\epsilon r |D \setminus Y|}{\sqrt{\ell}\rho^2}$, the first claim gives $|D_{\uparrow v_Y} \cap J| \leq \frac{2 \epsilon \rr |J||D \setminus Y|}{d_Y \ell \rho^2},$ and we find that
    \[
        \frac{|Y \cap D_{\uparrow v_Y}|}{\max(|D_{\uparrow v_Y} \cap J|,\frac{\epsilon}{\sqrt{\ell}\rho^2}|J|)} \geq \frac{\frac{\epsilon \rr}{\sqrt{\ell}\rho^2 d_Y} \rho n \cdot |D \setminus Y|}{\max\left(\frac{2 \epsilon \rr |J||D \setminus Y|}{d_Y \ell \rho^2},\frac{\epsilon}{\sqrt{\ell}\rho^2}|J|\right)} \geq \frac{\sqrt{\ell}\rho n}{16 \log(8\rho/\epsilon)|J|},
    \]
    where we use that $\rr \geq \frac{\sqrt{\ell}}{16 \log(8\rho/\epsilon)}$ and $|D\setminus Y| \geq d_Y.$
    
    Since this applies for any $Y \subseteq C$ of size $\left(1-\frac{\sqrt{\ell}}{16 \rr \log(8\rho/\epsilon)}\right)\rho n,$ consider all such choices of $Y$ and let $v$ be the $v_Y$ with the minimum number of neighbours in $D.$
    We claim that $v$ satisfies the conclusion of the lemma.
    First, if $|J \cap D_{\uparrow v}| \geq \frac{\log(8\rho/\epsilon)}{\rr}|J|,$ then selecting $\gamma$ so that $|J \cap D_{\uparrow v}| = \frac{16 \gamma \log(8\rho/\epsilon)}{\sqrt{\ell}} |J|$ suffices because at least $\gamma \rho n \geq \frac{\sqrt{\ell}}{16 \rr}\rho n$ vertices are removed from the associated $Y,$ which is a subset of $C,$ and $|C| <\rho n.$
    Otherwise, use $\gamma=\frac{\sqrt{\ell}}{16 \rr \log(8\rho/\epsilon)}$ and observe that at least $1$ vertex is removed from every set $Y$ since $v$ was selected to be the lowest degree such vertex.
    Hence, $|C \setminus D_{\uparrow v}| < \left(1-\frac{\sqrt{\ell}}{16 \rr \log(8\rho/\epsilon)}\right)\rho n=(1-\gamma)\rho n.$
    
    Now we prove the desired bound on $|E(Y,D\setminus Y)|$ for any $Y \subseteq C$ with $|Y|=\left(1-\frac{\sqrt{\ell}}{16 \rr \log(8\rho/\epsilon)}\right)\rho n.$
    To start, we claim that $|E(D \setminus Y)| \leq \frac{64 \log^2(8\rho/\epsilon) \epsilon \rr^2}{\ell\rho^2} |D \setminus Y|^2.$
    To show this, suppose not and find a contradiction.
    Using \cref{lem:degreeThresholdExistence} (with $U=W=D \setminus Y$), there exists a $d$ satisfying $\frac{32 \log^2(8\rho/\epsilon) \epsilon \rr^2}{\ell \rho^2}{ |D\setminus Y| }\leq d \leq |D \setminus Y|$ such that at least $\frac{16 \log(8\rho/\epsilon) \epsilon \rr^2}{\ell \rho^2 d} |D \setminus Y|^2$ vertices in $D \setminus Y$ have degree larger than $d$ in $G[D \setminus Y]$ (which also means they have degree larger than $d$ in $G[D]$).
    Again using the first claim (since $d$ is sufficiently large), we find that at most $\frac{2\epsilon \rr |J| |D \setminus Y|}{d \ell \rho^2}$ vertices in $J$ have more than $d$ neighbours in $D.$
    
    Let $v$ be the vertex in $J$ having the highest number of neighbours in $D$ that is less than $d.$
    Then $|D_{\uparrow v} \cap J| \leq \frac{2\epsilon \rr |J| |D \setminus Y|}{d \ell \rho^2},$ and we find that 
    \[
        \frac{|(D \setminus Y) \cap D_{\uparrow v}|}{\max(|D_{\uparrow v} \cap J|,\frac{\epsilon}{\sqrt{\ell}\rho^2}|J|)} \geq \frac{\frac{16 \log(8\rho/\epsilon) \epsilon \rr^2}{\ell \rho^2 d} |D \setminus Y|^2}{\max(\frac{2 \epsilon \rr |J||D \setminus Y|}{d \ell \rho^2},\frac{\epsilon}{\sqrt{\ell}\rho^2}|J|)} > \frac{\rr |D \setminus Y|}{|J|},
    \]
    where we have to use that $\rr > \frac{\sqrt{\ell}}{16\log(8\rho/\epsilon)}$ and $|D\setminus Y| \geq d,$ which is a contradiction with \cref{eqn:ratioBound}.
    Hence $|E(D \setminus Y)| \leq \frac{64 \log^2(8\rho/\epsilon) \epsilon \rr^2}{\ell\rho^2} |D \setminus Y|^2.$

    Now, to show the lower bound on $|E(Y,D\setminus Y)|,$ let $R$ be a random set of size $\frac{\sqrt{\ell}}{16 \log(8\rho/\epsilon) \rr}\rho n$ vertices from $D \setminus Y.$
    We will count the number of edges in $G[R \cup Y]$ in expectation, and use this to get the desired bound on $|E(Y,D\setminus Y)|.$
    Using the upper bound on $|E(D \setminus Y)|,$ the expected number of edges in $G[R]$ is at most $\frac{64 \log^2(8\rho/\epsilon) \epsilon \rr^2}{\ell\rho^2} |R|^2 \leq \epsilon n^2/4.$
    The expected number of edges between $R$ and $Y$ is $|E(Y,D \setminus Y)| \cdot \frac{|R|}{|D \setminus Y|} = |E(Y,D\setminus Y)| \cdot \frac{\sqrt{\ell} \rho n}{16 \log(8\rho/\epsilon) \rr |D \setminus Y|}.$
    Finally, using that the number of edges in $G[Y]$ is at most $\epsilon n^2/4,$ then the expected number of edges in $G[R \cup Y]$ is at most $\epsilon n^2/4 + \epsilon n^2/4 + |E(Y,D\setminus Y)| \cdot \frac{\sqrt{\ell} \rho n}{16 \log(8\rho/\epsilon) \rr |D \setminus Y|}.$
    Since $R$ was selected randomly there exists a set $R$ for which there are at most that many edges in $G[R \cup Y],$ and since $|Y \cup R|=\rho n$ and $G$ is $\epsilon$-far from having a $\rho n$ independent set, then $|E(Y,D\setminus Y)| > \frac{4 \log(8\rho/\epsilon) \rr \epsilon}{\sqrt{\ell}\rho^2} \rho n |D \setminus Y|.$
\end{proof}

\subsection{Proof of Graph Container Lemma}
\label{sect:proofGCL}
In this section we complete the proof of \cref{lem:GCL}.

{\renewcommand\footnote[1]{}\lemGCL*} 

\begin{proof}
Let $J$ be any subset of $V$ for which $G[J]$ has fewer than $\frac{\epsilon}{\ell \rho^2}|J|^2$ edges.
We will construct a fingerprint $(F,R)$ that, along with container function \cref{alg:Container}, satisfies the conclusions of the lemma for set $J.$

Let $F$ be the fingerprint sequence constructed by \cref{alg:Fingerprint}, and let $C_{|F|}$ be the final container of \cref{alg:Fingerprint} run on $J.$
First observe that since $G[C_{|F|}]$ has at most $\epsilon n^2/4$ edges then $|C_{|F|}|< \left(1-\frac{\epsilon}{2\rho^2}\right)\rho n$ because otherwise by adding arbitrary vertices to $C_{|F|}$ until $|C_{F}|=\rho n$ (so at most $\frac{\epsilon}{2\rho^2}\rho n$ vertices added) we would find an induced subgraph on at least $\rho n$ vertices that has fewer than $\epsilon n^2$ edges, a contradiction.
Hence, let $|C_{|F|}|=(1-\alpha)\rho n,$ where $\frac{\epsilon}{2\rho^2} \leq \alpha \leq 1.$
By \cref{lem:weakContainment}, $|C_{|F|} \cap J| \geq \left(1-\frac{16\log^2(8\rho/\epsilon)}{\sqrt{\ell}}\right)|J|$ and $|F| \leq \frac{16 \rho^2 \log^2(8\rho/\epsilon)}{\epsilon}.$
Hence, if $\alpha \geq \frac12$ then set $R = \bot$ and observe that the container function \cref{alg:Container} outputs $C_{|F|}$ on input $(F,R)$ and satisfies the conclusions of the lemma for set $J.$
For the remainder of the proof we consider the case that $\alpha < \frac12.$

Let $t^*$ be the smallest index for which $|C_{t^*}| \leq \rho n.$
Break up the procedure into two parts: when $t \leq t^*$ and when $t > t^*.$
First, for $t\leq t^*$ we are going to argue about the step of the procedure that reaches the worst (smallest) ratio. 
Let $\rr$ be the smallest value such that there exists a step $t\leq t^*$ with 
\begin{equation}
    \label{eqn:worstRatio}
    {\rm MRR}_G(J,C_{t-1}) = \frac{\rr (|C_{t-1}|-(1-2\alpha)\rho n)}{|J|}.
\end{equation}

We claim that $t^* \leq \frac{\sqrt{\ell} \rho^2}{\epsilon}\frac{\log(2\rho/\epsilon)}{\rr}$ and $|C_{t^*} \cap J| \geq \left(1-\frac{\log(2\rho/\epsilon)}{\rr}\right)|J|.$
To show this, for each step $t\leq t^*,$ let $\beta_t |J|$ be the number of vertices of $J$ that are removed when constructing $C_{t}$ from $C_{t-1}.$
Then, based on the minimality of $\rr,$ we have that \[|C_{t-1}| - |C_t| \geq \max\left(\beta_t |J|, \frac{\epsilon}{\sqrt{\ell}\rho^2}|J|\right) \frac{\rr(|C_{t-1}|-(1-2\alpha)\rho n)}{|J|},\]
and by rearranging and subtracting $(1-2\alpha)\rho n$ from both sides, this can be rewritten as
\[
    |C_t|-(1-2\alpha)\rho n \leq \left(1 - \rr \max\left(\beta_t, \frac{\epsilon}{\sqrt{\ell}\rho^2}\right)\right) \left(|C_{t-1}|-(1-2\alpha)\rho n\right).
\]
Since this applies for all $t=1,\dots,t^*,$ then 
\[
    |C_{t^*}| -(1-2\alpha)\rho n \leq \prod_{t=1}^{t^*} \left(1-\rr \max\left(\beta_t, \frac{\epsilon}{\sqrt{\ell}\rho^2}\right)\right) \cdot (|C_0| - (1-2\alpha)\rho n).
\]
Since $|C_{t^*}| \geq (1-\alpha)\rho n$ then $|C_{t^*}| -(1-2\alpha)\rho n \geq \alpha \rho n \geq \frac{\epsilon}{2\rho}n.$
Also, since $\alpha < 1/2$ then $|C_0| - (1-2\alpha)\rho n < n.$
Substituting in, and using that $(1-x)\leq e^{-x},$ we find that
\begin{equation}
    \label{eqn:DkUpperBound}
    \frac{\epsilon}{2\rho} \leq \exp\left(-\rr \sum_{t=1}^{t^*} \max\left(\beta_t, \frac{\epsilon}{\sqrt{\ell}\rho^2}\right)\right).
\end{equation}
Using that the $\max(\cdot)$ in \cref{eqn:DkUpperBound} is at least as large as the first operand, we have that $\sum_{t=1}^{t^*} \beta_t \leq \frac{\ln(2\rho/\epsilon)}{\rr} \leq \frac{\log(2\rho/\epsilon)}{\rr}.$
Similarly, using that the $\max(\cdot)$ in \cref{eqn:DkUpperBound} is at least as large as the second operand, then $t^* \leq \frac{\sqrt{\ell} \rho^2 \ln(2\rho/\epsilon)}{\rr \epsilon} \leq \frac{\sqrt{\ell} \rho^2 \log(2\rho/\epsilon)}{\rr \epsilon}.$

For the remaining steps $t > t^*,$ by \cref{lem:weakContainment} we have that
\[
    |C_{t-1}| - |C_t| \geq \frac{\sqrt{\ell}}{16 \log(8\rho/\epsilon)} \frac{\max\left(\beta_t|J|,\frac{\epsilon}{\sqrt{\ell}\rho^2}|J|\right)}{|J|} \rho n.
\]
Summing up over all $t> t^*$ we find
\[
    |C_{t^*}| - |C_{|F|}| \geq \sum_{t=t^*+1}^{|F|}\frac{\sqrt{\ell}}{16 \log(8\rho/\epsilon)} \frac{\max\left(\beta_t|J|,\frac{\epsilon}{\sqrt{\ell}\rho^2}|J|\right)}{|J|} \rho n.
\]
Since $|C_{t^*}| \leq \rho n$ and $|C_{|F|}| \geq (1-\alpha)\rho n,$ lower bounding the $\max$ by the first operand gives $\sum_{t=t^*+1}^{|F|} \beta_t \leq \frac{16 \alpha \log(8\rho/\epsilon)}{\sqrt{\ell}},$ and lower bounding the max by the second operand gives $|F|-t^* \leq \frac{16 \alpha \log(8\rho/\epsilon) \rho^2}{\epsilon}.$
Combining, we have that $\sum_{t=1}^{|F|} \beta_t \leq \frac{\log(2\rho/\epsilon)}{r}+\frac{16 \alpha \log(8\rho/\epsilon)}{\sqrt{\ell}}$ and $|F| \leq \frac{\sqrt{\ell} \rho^2 \log(2\rho/\epsilon)}{\rr \epsilon}+ \frac{16 \alpha \rho^2\log(8\rho/\epsilon)}{\epsilon}.$

Now, if $\alpha > \frac{\sqrt{\ell}}{32 \rr \log(8\rho/\epsilon)}$ then set $R = \bot$ and again observe that the container function \cref{alg:Container} outputs $C_{|F|}$ on input $(F,R)$ and satisfies the conclusions of the lemma for set $J.$
Otherwise, let $t$ be the step of the procedure that achieves \cref{eqn:worstRatio} described above.
Using that $\rr > \frac{\sqrt{\ell}}{16 \log(8\rho/\epsilon)}$ (by \cref{lem:weakContainment}), apply \cref{lem:revision} with $C = C_{|F|}$ and $D=C_{t-1}$ to find that there exists a vertex $v \in J$ and $\gamma \geq \frac{\sqrt{\ell}}{16 \rr \log(8\rho/\epsilon)}$ such that $|C_{|F|} \setminus (C_{t-1})_{\uparrow v}| \leq (1-\gamma)\rho n$ and $|J \cap (C_{t-1})_{\uparrow v}| \leq \frac{16 \gamma \log^2(8\rho/\epsilon)}{\sqrt{\ell}} |J|.$
Hence, the container $C_{|F|} \setminus (C_{t-1})_{\uparrow v}$ satisfies the lemma conclusions for $J$ because 
\begin{align*}
    |(C_{|F|} \setminus (C_{t-1})_{\uparrow v}) \cap J| 
    & \geq \left(1-\frac{\log(2\rho/\epsilon)}{\rr}-\frac{16\alpha \log(8\rho/\epsilon)}{\sqrt{\ell}}-\frac{16 \gamma \log^2(8\rho/\epsilon)}{\sqrt{\ell}}\right)|J| \\
    & \geq \left(1-\frac{32 \gamma \log^2(8\rho/\epsilon)}{\sqrt{\ell}}\right)|J|
\end{align*}
and $|F| \leq \frac{16 \gamma \rho^2 \log^2(8\rho/\epsilon)}{\epsilon}.$
Hence, set $R=(t-1,v),$ and observe that the container function \cref{alg:Container} outputs $C_{|F|} \setminus (C_{t-1})_{\uparrow v}$ on input $(F,R)$ as desired.

Construct the collection $\mathcal{F}$ by applying the above process to construct a fingerprint $(F,R)$ for every $J \subseteq V$ for which $G[J]$ has fewer than $\frac{\epsilon}{\ell \rho^2}|J|^2$ edges.
This completes the proof of the lemma.
\end{proof}

\begin{remark}
    \label{rem:logNecessity}
    \cref{lem:GCL} provides a collection of containers for sparse subgraphs $G[J]$ with fewer than $\frac{\epsilon}{\ell \rho^2}|J|^2$ edges for $\ell = \Omega(\log^4(1/\epsilon)),$ and, as seen in \cref{thm:tolerantTester}, this implies a $(\epsilon',\epsilon)$-tolerant tester with $\epsilon' \sim \frac{\epsilon}{\log^4(1/\epsilon)}.$
    It is desirable to have an efficient fully tolerant tester, which is a tester that applies for any $\epsilon' < \epsilon,$ because it implies an efficient distance approximation algorithm.
    Using our approach, this would require a graph container lemma that applies for sparse subgraphs $G[J]$ with fewer than $\frac{\epsilon}{\ell \rho^2}|J|^2$ edges for any $\ell > 1.$
    
    While we do not try to optimize the logarithmic dependence in our results, we note here that there is a simple example that demonstrates the challenge with removing the $\log$ factors completely in our approach.
    In particular, consider a graph $G=(V,E)$ formed by planting a $\rho n/2$ independent set $I$ in the random graph $G(n,\frac{8\epsilon}{\rho^2}).$
    Further, select a set $J \subseteq I$ of size $\frac{\rho n}{100}$ (although the size does not matter) and plant a random subgraph $G(|J|,\frac{\epsilon}{\rho^2}).$
    Finally, for each vertex $v$ in $J,$ select $\frac{8 \epsilon}{\rho^2}n$ random vertices in $V \setminus N(v)$ and add edges between $v$ and those vertices so that the vertices in $J$ have the highest degree in the graph.
    First observe that this graph is $\epsilon$-far from \rhoIndepSet.
    Now, consider running \cref{alg:Fingerprint} on $J$ and $G.$
    With high probability, at each step the vertices in $J$ have the highest degree in the graph, and so the only option is to remove neighbours.
    However, doing this removes at $\frac{16\epsilon}{\rho^2}$ fraction of the container in each step, and so it requires at least roughly $\frac{\rho^2}{16\epsilon}\log(1/\rho)$ steps before the container is of size less than $\rho n.$
    However, in doing so we remove at least roughly $\frac{\epsilon}{\rho^2}$ fraction of $J$ in each step, and so after roughly $\frac{\rho^2}{16\epsilon}\log(1/\rho)$ steps only roughly $\rho^{1/16}$ of $J$ is still contained in the container, which in general may be much smaller than $1/2.$
    In order to guarantee that the final container contains more of $J$ we would need $G[J]$ to have edge density less than roughly $\frac{\epsilon}{\rho^2 \log(1/\rho)}.$
\end{remark}

\section{Counting Sparse Subgraphs in Regular Graphs}
In this section we prove \cref{thm:countingSparseSubgraphs}.
The proof follows a similar counting argument to the result by Sapozhenko \cite{sapozhenko2001number} with \cref{lem:GCL} replacing a standard graph container lemma.

The main idea is that, for an appropriate value of $\epsilon,$ any $d$-regular graph is $\epsilon$-far from having an independent set on just over $\frac{n}{2}$ vertices.
Hence, we can apply \cref{lem:GCL} to find a collection of fingerprints and associated containers such that any sparse subgraph is mostly contained in at least one of these containers.
The remaining vertices in the sparse subgraph may come from outside the container, however since there are not too many of them then we can upper bound the number of sparse subgraphs by counting every possible subset of every container, and multiplying by the number of ways that the small number of remaining vertices can be selected from the rest of the graph.

\thmCountingSparseSubgraphs*

\begin{proof}
    Let $G=(V,E)$ be a $d$-regular graph and let $k \geq c \log^9 n$ for a sufficiently large constant $c.$
    First we claim that for any $\epsilon,$ $G$ is $\epsilon$-far from having an independent set of size $\left(\frac{1}{2}+\frac{\epsilon n}{d}\right)n.$
    To show this, consider any set $S \subset V$ of size $\left(\frac{1}{2}+\frac{\epsilon n}{d}\right)n.$
    Observe that $|E(V\setminus S,V)| \leq d \cdot |V \setminus S| = d \left(\frac{1}{2}-\frac{\epsilon n}{d}\right)n.$
    Since there are at least $\frac{d n}{2}$ edges in the graph, then $|E(S)| \geq \frac{d n}{2} - d\left(\frac{1}{2}-\frac{\epsilon n}{d}\right)n = \epsilon n^2.$

    We will select a specific $\epsilon$ and $\ell$ and apply \cref{lem:GCL} using $\rho = \left(\frac{1}{2}+\frac{\epsilon n}{d}\right)n.$
    To do this we consider two cases.
    First, if $k < d^3$ then we set $\epsilon = c_1\frac{d \log^3(n)}{k^{1/3} n}$ for a sufficiently large constant $c_1.$
    Otherwise, if $k \geq d^3,$ set $\epsilon = c_2\frac{\log^3(n)}{n}$ for a sufficiently large constant $c_2.$
    In either case, let $\ell = \frac{2\epsilon k n}{d}$ so that the number of edges in any subgraph $J \subseteq V$ with edge density at most $\frac{1}{k}\frac{d}{n}$ is at most \[\frac{1}{k}\frac{d}{n}\binom{|J|}{2} \leq \frac{1}{2k}\frac{d}{n} |J|^2 = \frac{\epsilon}{\ell}|J|^2 \leq \frac{\epsilon}{\ell \rho^2} |J|^2.\]
    Hence, by \cref{lem:GCL}, there exists an absolute constant $c_3$ and a collection $\mathcal{F}$ of fingerprints, and a container function $C: \mathcal{F}\rightarrow P(V)$ such that for any $J \subseteq V$ for which $G[J]$ has edge density less than $\frac{1}{k}\frac{d}{n},$ there exists $(F,R) \in \mathcal{F}$ and $\alpha$ such that $|C(F,R)| = (1-\alpha)\rho n,$ $|F| \leq \frac{c_3 \alpha \rho^2 \log^2(n)}{\epsilon}$ and $|J \cap C(F,R)| \geq \left(1-\frac{c_3\alpha\log^2(n)}{\sqrt{\ell}}\right)|J|,$ where we use that $1/\epsilon \leq n$ for either choice of $\epsilon$ in upper bounding the logarithms.

    Since $\alpha \leq 1$ and $|J| \leq n,$ the third conclusion can be simplified to say that for any such sparse subgraph $G[J],$ there exists $(F,R) \in \mathcal{F}$ such that all but at most $\frac{c_3 \log^2n}{\sqrt{\ell}} n$ of $J$ is contained in $C(F,R).$
    Hence, the number of such sparse subgraphs is at most
    \begin{equation}
        \label{eqn:sparseSubgraphsUpperBound1}
        \sum_{(F,R) \in \mathcal{F}} 2^{|C(F,R)|} \cdot \binom{n}{\frac{c_3 n\log^2n}{\sqrt{\ell}}}.
    \end{equation}
    Let $f_{\max}=\frac{c_3 \rho^2 \log^2(n)}{\epsilon},$ and split the sum up into fingerprints $(F,R)$ for which $|F|=f$ for ${ f=1,\dots,f_{\max} }.$
    For such a fingerprint $(F,R)$ with $|F|=f,$ observe that \[|C(F,R)| \leq \left(1-\frac{\epsilon f}{c_3 \rho^2 \log^2n}\right) \rho n = \left(1-\frac{\epsilon f}{c_3 \rho^2 \log^2n}\right)\left(\frac{1}{2}+\frac{\epsilon n}{d}\right)n \leq \frac{n}{2} + \frac{\epsilon n^2}{d} - \frac{\epsilon f n}{2c_3 \log^2n}.\]
    Also observe that there are at most $(2n)^f \cdot n \cdot f \leq (2n)^{f+2}$ possible $(F,R)$ with $|F|=f.$
    Substituting into \cref{eqn:sparseSubgraphsUpperBound1}, we find that the number of $J \subseteq V$ for which $G[J]$ has edge density less than $\frac{1}{k}\frac{d}{n}$ is at most
    \begin{equation}
        \label{eqn:sparseSubgraphsUpperBound2}
        \sum_{f=1}^{f_{\max}} (2n)^{f+2} ~ 2^{\frac{n}{2} + \frac{\epsilon n^2}{d} - \frac{\epsilon f n}{2c_3 \log^2n}} ~ \binom{n}{\frac{c_3 n\log^2 n}{\sqrt{\ell}}} \leq \sum_{f=1}^{f_{\max}} 2^{\frac{n}{2} + \frac{\epsilon n^2}{d} - \frac{\epsilon f n}{2c_3 \log^2n} + (f+2)\log(2n) + \frac{c_3 n \log^3n}{\sqrt{\ell}}},
    \end{equation}
     where we use the upper bound $\binom{n}{\frac{c_3 n \log^2n}{\sqrt{\ell}}} \leq n^{\frac{c_3 n \log^2n}{\sqrt{\ell}}}.$

    Now, we consider the $\epsilon$ selections for the two cases.
    First, if $k <d^3$ then use that $\epsilon = c_1\frac{d \log^3(n)}{k^{1/3} n}$ and $\ell=\frac{2\epsilon k n}{d}=2c_1 k^{2/3} \log^3n,$ and observe that $\frac{\epsilon f n}{2c_3 \log^2n} \geq (f+2)\log(2n)$ as long as $c_1$ is sufficiently large.
    Substituting into \cref{eqn:sparseSubgraphsUpperBound2} we find
    \[\sum_{f=1}^{f_{\max}} 2^{\frac{n}{2} + \frac{c_1 n\log^3(n)}{k^{1/3}} + \frac{c_3 n \log^{3/2}n}{\sqrt{2c_1} k^{1/3}}}= \sum_{f=1}^{f_{\max}} 2^{\frac{n}{2}\left(1+O\left(\frac{\log^3n}{k^{1/3}}\right)\right)}=2^{\frac{n}{2}\left(1+O\left(\frac{\log^3n}{k^{1/3}}\right)\right)},\]
    where the last equality uses that $f_{\max} \leq n.$

    Next, in the case that $k \geq d^3,$ use that $\epsilon = c_2\frac{\log^3(n)}{n}$ and $\ell=\frac{2\epsilon k n}{d}=\frac{2c_2 k \log^3n}{d} \geq 2c_2 k^{2/3} \log^3 n,$ and again observe that $\frac{\epsilon f n}{2c_3 \log^2n} \geq (f+2)\log(2n)$ as long as $c_2$ is sufficiently large.
    Substituting into \cref{eqn:sparseSubgraphsUpperBound2} we find
    \[\sum_{f=1}^{f_{\max}} 2^{\frac{n}{2} + \frac{c_2 n\log^3(n)}{d} + \frac{c_3 n \log^{3/2}n}{\sqrt{2c_2} k^{1/3}}}= \sum_{f=1}^{f_{\max}} 2^{\frac{n}{2}\left(1+O\left(\frac{\log^3n}{d}\right)+ O\left(\frac{\log^{3/2}n}{k^{1/3}}\right)\right)}=2^{\frac{n}{2}\left(1+O\left(\frac{\log^3n}{d}\right)+O\left(\frac{\log^{3/2}n}{k^{1/3}}\right)\right)},\]
    where the last equality again uses that $f_{\max} \leq n.$

    Combining, we have that the number of $J \subseteq V$ for which $G[J]$ has edge density less than $\frac{1}{k}\frac{d}{n}$ is at most
    \[2^{\frac{n}{2}\left(1+O\left(\frac{\log^3n}{d}\right)+O\left(\frac{\log^3n}{k^{1/3}}\right)\right)}.\]
\end{proof}

\begin{remark}
    \label{rem:countSparseSubgraphsExample}
    We can also show that the graph formed by $\frac{n}{2d}$ disjoint copies of $K_{d,d},$ which is $d$-regular, has at least roughly $2^{\frac{n}{2}\left(1+\frac{1}{2d}+\Omega\left(\frac{\log(k)}{k}\right)\right)}$ induced subgraphs with edge density at most $\frac{1}{k} \frac{d}{n}.$
    To show this, consider any subset $J \subseteq V$ formed by selecting a half from each $K_{d,d},$ and selecting at least $d/4$ vertices from that half and fewer than $\frac{d}{32k}$ vertices from the other half.
    Observe that the $\frac{d}{32k}$ vertices each have degree at most $d,$ and so any such subgraph $G[J]$ has at most $\frac{n}{2d} \frac{d}{32k} \cdot d = \frac{dn}{64k}$ edges.
    Since there are at least $n/8$ vertices in any such $J$, then the edge density is at most $\frac{dn}{k}.$

    Now, we count the number of possible sets $J.$
    There are $2^{\frac{n}{2d}}$ ways to select a half for each of the $\frac{n}{2d}$ copies of $K_{d,d},$ there are at least $2^d - \binom{n}{n/4}$ ways to select the larger half of each $K_{d,d}$, and at least $\binom{d}{\frac{d}{32k}}$ ways to select the smaller half.
    Hence, the number of such subgraphs is at least \[2^{\frac{n}{2d}} \cdot \left(2^d - \binom{d}{d/4}\right)^{\frac{n}{2d}} \cdot \binom{d}{\frac{d}{32k}}^{\frac{n}{2d}}.\]
    Use that $\binom{d}{d/4} \leq 2^{\frac{d}{4}\log(4e)} \leq 2^{7d/8}$ and $\binom{d}{\frac{d}{32k}} \geq (32k)^{\frac{d}{32k}} = 2^{\frac{d}{32k}\log(32k)}$ to find that the number of subgraphs with edge density less than $\frac{d}{kn}$ is at least    
    \[2^{\frac{n}{2}\left(1+\frac{1-2^{-\Omega(d)}}{d}+\Omega\left(\frac{\log(k)}{k}\right)\right)}.\]
\end{remark}

\section*{Acknowledgements}
Thank you to Eric Blais for many useful discussions throughout this project and for feedback on early versions of the paper.
Thank you to Asaf Shapira for useful comments regarding prior work on testing, and tolerantly testing, graph partition properties.
We also thank Robert Andrews and Janani Sundaresan for reading and giving feedback on the paper.
Finally, we thank Aadila Ali Sabry for the interesting discussions related to this work during the Directed Reading Program at Waterloo.

\bibliographystyle{alpha}
\bibliography{references}

\end{document}